\definecolor{darkblue}{rgb}{0.1,0.1,.7}
\newtheorem{lemma}{Lemma}
\newtheorem{theorem}{Theorem}
\newcommand{\calL}{\mathcal{L}}
\newcommand{\GNY}{{\rm GNY}}
\newcommand{\eps}{\epsilon}
\newcommand{\Spin}{\mathrm{Spin}}
\newcommand{\Sp}{\mathrm{Sp}}
\newcommand{\SL}{\mathrm{SL}}
\newcommand{\SO}{\mathrm{SO}}
\newcommand{\SU}{\mathrm{SU}}
\newcommand{\upO}{\mathrm{O}}
\newcommand{\U}{\mathrm{U}}
\newcommand{\Vrep}{\mathrm{Vec}}
\newcommand{\R}{\mathbb{R}}
\newcommand{\C}{\mathbb{C}}
\newcommand{\HQ}{\mathbb{H}}
\newcommand{\Z}{\mathbb{Z}}
\newcommand{\newHeisenberg}{orthogonal Heisenberg}
\newcommand{\NewHeisenberg}{Orthogonal Heisenberg}
\def\@fpheader{\ }
\DeclarePairedDelimiter\bra{\langle}{\rvert}
\DeclarePairedDelimiter\ket{\lvert}{\rangle}
\DeclarePairedDelimiter\abs{\lvert}{\rvert}
\DeclareMathOperator{\clos}{clos}
\title{Classifying GNY-like models}
\author[1,2]{Matthew S.~Mitchell}
\author[1]{and David Poland}
\affiliation[1]{Department of Physics, Yale University, 217 Prospect St, New Haven, CT 06520, USA}
\affiliation[2]{Department of Physics, University of Pisa and INFN, Largo Pontecorvo 3, I-56127 Pisa, Italy}
\emailAdd{matthew.mitchell@yale.edu}
\emailAdd{david.poland@yale.edu}
\date{} 
\abstract{
  We perform a systematic classification of (2+1)d Gross--Neveu--Yukawa-like models built out of one or more 4-component Dirac fermions and $M$ scalar fields, which preserve an O($M$) symmetry rotating the scalars. We then identify the perturbative fixed points of these models in the $4-\epsilon$ expansion. Our classification highlights several targets for the conformal bootstrap and reveals a new fixed point with $M=3$, which we call the ``orthogonal Heisenberg'' CFT.
}
\begin{document}

\maketitle

\section{Introduction and motivation}

Critical universality enables us to make specific, widely applicable statements about critical phases of matter via conformal field theory. The goal of this paper is to make some progress toward classifying (2+1)d critical points of \emph{fermionic fields} with \emph{scalar order parameters}. To this end, we shall classify several modifications of the Gross--Neveu--Yukawa model with different symmetry groups, determine which of them possess nontrivial fixed points in the $4-\eps$ expansion, and compute their scaling dimensions perturbatively.

Our work extends similar analyses that have been performed for multi-scalar fixed points, see e.g.~\cite{toledano1985renormalization, Osborn:2017ucf,Codello:2020lta,  Osborn:2020cnf, Hogervorst:2020gtc, Rong:2023xhz}, and is similar in spirit to the recent studies of fermion-scalar fixed points in~\cite{Pannell:2023tzc} and~\cite{Jack:2024sjr}. Compared to these works, we will perform here a more systematic group theory based classification of a specific class of well-motivated models: (2+1)d GNY models containing $N_D$ 4-component Dirac fermions which preserve an $\upO(M)$ symmetry rotating the $M$ scalar fields. 

Our classification includes several universality classes that appear extensively in the literature (specifically the chiral Ising, chiral XY, and chiral Heisenberg models), as well as two that are not well-studied.  The first, which we call the ``quarter GNY'' model, is mentioned in table 4 of~\cite{Pannell:2023tzc} and section 12 of~\cite{Jack:2024sjr}, while the second, which we call the ``\newHeisenberg'' model, does not appear in the literature as far as we are aware. Both of these emerge from an explicit breaking of (3+1)d Lorentz symmetry which preserves the (2+1)d Lorentz subgroup.

To start off, let us consider a GNY model with $N_D$ 4-component Dirac fermions coupled to one scalar, with the Lagrangian 
\begin{equation}\label{eq:gnyDirac}
  \calL = - \frac 1 2 (\partial\phi)^2
  - i \frac 1 2 \bar \Psi_a \slashed\partial \Psi_a
  - \frac 1 2 m^2 \phi^2
  - \frac \lambda 4 \phi^4
  - i \frac g 2 \phi \Psi_a^\dagger \Psi_a.
\end{equation}
For clarity, we shall always use $\Psi$ for 4-component spinors and $\psi$ for 2-component spinors. This theory is particularly interesting because it admits a symmetry enhancement in (2+1)d: There exists a basis in which the first three Dirac-representation $\gamma$ matrices are real and block-diagonal, so when $\gamma^3$ is removed, each Dirac fermion factorizes into four Majorana fermions. This causes the model's $\U(N_D)$ symmetry to be enhanced to $\upO(4N_D)$, resulting in a (2+1)d GNY model with Majorana fermions:
\begin{equation} \label{eq:gnyMajorana} \calL_\GNY = -\frac12 (\partial \phi)^2 - i \frac{1}{2}
  \psi_i \slashed{\partial} \psi_i -\frac{1}{2}m^2\phi^2 -\frac{\lambda}{4}\phi^4 - i \frac{g}{2}
  \phi \psi_i\psi_i.
\end{equation}
This model has been studied pertubatively in quite some detail in both the large-$N$~\cite{Gracey:1992cp,Derkachov:1993uw,Gracey:1993kc, Petkou:1996np, Moshe:2003xn, Iliesiu:2015qra, fei2016yukawa, Manashov:2017rrx, Gracey:2018fwq, Semenoff:2024jqf} and $\epsilon$-expansions~\cite{gracey1990three, rosenstein1993critical, zerf2016superconducting, Gracey:2016mio, fei2016yukawa, Mihaila:2017ble, Zerf:2017zqi, Ihrig:2018hho}. It has also been studied using the conformal bootstrap in \cite{Iliesiu:2015qra, Iliesiu:2017nrv, Erramilli:2022kgp,Mitchell:2024hix}.

Pay close attention to Yukawa coupling in equation (\ref{eq:gnyDirac}), which is written using $\Psi_a^\dagger \Psi_a$ (which would not respect (3+1)d Lorentz symmetry) instead of the more conventional $\bar \Psi_a \Psi_a = \Psi_a^\dagger \gamma^0 \Psi_a$. This is an intentional choice. In the former case, the Yukawa term factorizes as
\begin{equation} \label{eq:yukGNY}
  - i \frac g 2 \phi \Psi_a^\dagger \Psi_a
  =- i \frac g 2 \phi (\psi_a^{Lr} \psi_a^{Lr} + \psi_a^{Li} \psi_a^{Li}
  + \psi_a^{Rr} \psi_a^{Rr} + \psi_a^{Ri} \psi_a^{Ri}),
\end{equation}
while in the latter, it factorizes as
\begin{equation} \label{eq:yukCI}
  - i \frac g 2 \phi \bar\Psi_a \Psi_a
  =- i \frac g 2 \phi (\psi_a^{Lr} \psi_a^{Lr} + \psi_a^{Li} \psi_a^{Li}
  - \psi_a^{Rr} \psi_a^{Rr} - \psi_a^{Ri} \psi_a^{Ri}).
\end{equation}
Here, the upper indices $L$ and $R$ refer to the left- and right-handed components of each Dirac spinor, while $r$ and $i$ refer to the real and imaginary parts. For the mathematical details of this decomposition, see section \ref{sec:spinor3d}, as well as appendix C of \cite{Erramilli:2022kgp}.

These expressions are superficially similar (and as we shall see, have extremely similar conformal fixed points in the $4-\epsilon$ expansion), but their global symmetry groups are quite different. In equation (\ref{eq:yukGNY}), we sum over all possible values of the upper indices just as we sum over lower indices, and obtain an $\upO(4N_D)$ symmetry. This is expected, as the (2+1)d GNY universality classes have $\upO(N)$ symmetry. But in (\ref{eq:yukCI}), reversing the sign of $\phi$ swaps the left- and right-handed terms, so the symmetry group is
\begin{equation}
  \upO(2N_D) \times \upO(2N_D) \rtimes \Z_2.
\end{equation}
This turns out to be the \textbf{chiral Ising} model, which is also well-studied in the condensed matter literature 
\cite{vojta2000quantum, vojta2003quantum,Moon:2012rx,Herbut:2014lfa,Rosenstein:1993zf,Mihaila:2017ble,fei2016yukawa,Janssen:2014gea,Vacca:2015nta,Chandrasekharan:2013aya,Wang:2014cbw,Li:2014aoa,Huffman:2017swn,Hesselmann:2016tvh,Gracey:1990wi,Gracey:1992cp,Vasiliev:1992wr,Vasiliev:1993pi,Gracey:1993kb,Gracey:1993kc,Derkachov:1993uw,Gracey:2017fzu,Manashov:2017rrx,gracey1990three,Gracey:1991vy,Luperini:1991sv,Zerf:2017zqi,Ihrig:2018hho,Huffman:2019efk,Liu:2019xnb} 
(though note that many sources do a poor job of distinguishing between chiral Ising and GNY). The distinction between these models is very important, because phase transitions in the GNY universality class spontaneously break parity, while phase transitions in the chiral Ising universality class preserve a parity symmetry while breaking the $\Z_2$ part of the global symmetry group.

The chiral Ising model is not the only possible modification of the GNY model. If we promote $\phi$ to a two-component flavor vector of scalar fields, we could write down a Lagrangian of the form
\begin{equation}\label{eq:chiXY}
  \calL = \calL_{\mathrm{free}}
  - \frac \lambda 4 \abs{\phi}^4
  - i \frac g 2 \bar\Psi_a \left( \phi_1 + i\phi_2 \gamma_5 \right) \Psi_a.
\end{equation}
This is the so-called \textbf{chiral XY} model (also known as the ``Nambu--Jona--Lasinio--Yukawa model''), which appears as a universality class in Dirac and Weyl semimetals~\cite{Zerf:2017zqi} and is known to display emergent supersymmetry at $N=2$~\cite{fei2016yukawa}. It has also been discussed recently as describing the semimetal-to-Kekul\'e transition on the honeycomb lattice and the twist-tuned transition in moir\'e bilayer graphene~\cite{Hawashin:2025cua}.

In this paper, we perform a classification of all ``GNY-like'' models, which we define to have the following properties:
\begin{itemize}
\item $N_D$ Dirac fermion fields $\Psi$ in the fundamental representation of $\U(N_D)$, which may admit a symmetry enhancement in (2+1)d.
\item Lorentz symmetry in (2+1)d. We permit models to break Lorentz symmetry in (\textbf{3}+1)d, as long as they preserve an $\SO(2,1)$ subgroup.
\item An order parameter $\phi$, which is a vector of $M$ real scalar fields.
\item A Yukawa coupling of the form $\calL_Y = - i \bar\Psi_a \left( \phi_m S_m \right) \Psi_a$, where S is a vector of $M$ spinor matrices, which does \textbf{not} explicitly break the $\upO(M)$ symmetry of $\phi$.
\end{itemize}
The purpose of the first two conditions is to limit our attention to critical points that could arise straightforwardly in condensed matter systems, particularly those with strongly correlated electrons.  The purpose of the third and fourth is to ensure that the phase transition, in the Landau paradigm, corresponds to the spontaneous breaking of an $\upO(M)$ symmetry.

In future work, we hope to use similar methods classify critical points for which $\upO(M)$ is not preserved, in order to see if there are any new universality classes which can be obtained from GNY ones by RG flows (as is the case for the Heisenberg and cubic universality classes \cite{Chester:2020iyt}). We also leave to future work the more general study of fixed points involving (2+1)d Majorana fermions which do not embed into 4-component Dirac fermions. 

\section{Representation theory}

We'll begin by carefully defining our notation and describing our setup. Along the way we'll prove various lemmas needed for our classification. For readers in a hurry, the most important takeaways are the definitions in subsection \ref{sec:globsym}, the expression for the symmetry group in theorem \ref{th:finalG}, the enumeration of models in tables \ref{tab:gnylist1}, \ref{tab:gnylist2}, and \ref{tab:gnylist3}, and the list of fixed points in table \ref{tab:ssb}.

\subsection{Review of spinors in (2+1)d} \label{sec:spinor3d}

In $d$-dimensional Euclidean space, a \textbf{spinor} is a representation of $\Spin(d)$, the unique double-cover of $\SO(d)$. Similarly, a Lorentzian spinor is a representation of $\Spin(d-1,1)$, which double-covers the (proper orthochronous) Lorentz group $\SO(d-1,1)^+$. In either signature, a spinor representation corresponds to a choice of Dirac matrices $\gamma$ which satisfy the Clifford algebra.
\begin{equation}
  \gamma^\mu \gamma^\nu + \gamma^\nu \gamma^\mu = 2 \eta^{\mu\nu}.
\end{equation}
The Spin groups in 3 and 4 dimensions are:
\begin{equation}
  \begin{split}
    \Spin(3) &= \SU(2),\\
    \Spin(2,1) &= \SL(2,\R) = \Sp(2,\R) = \mathrm U (1,1),\\
    \Spin(4) &= \SU(2) \times \SU(2),\\
    \Spin(3,1) &= \SL(2,\C) = \Sp(2,\C).
  \end{split}
\end{equation}

We'd like to write a GNY Lagrangian in terms of \textbf{2-component Majorana spinors}. This is easy in (2+1)d Lorentzian signature, because the Clifford algebra has a real representation given by
\begin{align}
  \gamma^0 = i\sigma^2, &&
  \gamma^1 = \sigma^1, &&
  \gamma^2 = \sigma^3.
\end{align}
In $(4-\epsilon)$d, we can choose a basis $\tilde\gamma^\mu$ that simplifies to the real (2+1)d basis when $\epsilon=1$. In block form, these are
\begin{align}
  \tilde\gamma^0 = \begin{pmatrix} \gamma^0 & 0 \\ 0 & -\gamma^0 \end{pmatrix},
  &&
     \tilde\gamma^1 = \begin{pmatrix} \gamma^1 & 0 \\ 0 & -\gamma^1 \end{pmatrix},
  &&
     \tilde\gamma^2 = \begin{pmatrix} \gamma^2 & 0 \\ 0 & -\gamma^2 \end{pmatrix},
  &&
     \tilde\gamma^3 = \begin{pmatrix} 0 & -i1 \\ i1 &0 \end{pmatrix},
\end{align}
\begin{equation}
  \tilde\gamma^5 = i\tilde\gamma^0 \tilde\gamma^1 \tilde\gamma^2 \tilde\gamma^3 =
  \begin{pmatrix} 0 & 1 \\ 1 &0 \end{pmatrix}.
\end{equation}
The first 3 of these matrices are block-diagonal, so our 4-component complex spinors can be factorized into left- and right-handed 2-component complex spinors when $\tilde\gamma^3$ is removed. We write this as
\begin{equation}
  \Psi =
  \begin{pmatrix}
    \psi^L \\ \psi^R
  \end{pmatrix}.
\end{equation}
Because of the sign in $\tilde\gamma^0$, the conjugate of $\Psi$ is
\begin{equation}
  \bar\Psi =
  \begin{pmatrix}
    \bar\psi^L & -\bar\psi^R
  \end{pmatrix}.
\end{equation}
Furthermore, because $\gamma^\mu$ are real, we can decompose each 2-component spinor into its real and imaginary parts, each of which is a Majorana spinor.

In the convention of \cite{Iliesiu:2015qra}, the product of two Majorana spinors is
\begin{equation} \label{eq:majoranaProduct}
  \chi\psi \equiv (\gamma^0)^{\alpha\beta} \chi_\alpha \psi_\beta.
\end{equation}
The matrix $\gamma^0 = i\sigma^2$ is antisymmetric, and fermions are made of Grassmann numbers, so this product is \textbf{commutative}. Using this convention, we can write the product of two \emph{complex} 2-component spinors as
\begin{equation} \label{eq:complexprod}
  \begin{split}
    \bar\chi \psi &= (\chi_r - i \chi_i) \gamma^0 (\psi_r + i \psi_i)\\
                  &= \chi_r \psi_r + \chi_i \psi_i + i (\chi_r \psi_i - \chi_i \psi_r),
  \end{split}
\end{equation}
where the terms in the last line have an implicit $\gamma^0$ as in (\ref{eq:majoranaProduct}). So when restricted to 2+1 dimensions, a complex 4-component Dirac spinor can be decomposed into 4 real 2-component Majorana spinors.

Remember, our goal is to write down a coupling of the form $\bar\Psi S(\phi) \Psi$ that reduces to a sum of Yukawa couplings in (2+1)d, where $S(\phi)$ has spinor indices and is linear in $\phi$. When expanded out into the Majorana basis, such a coupling looks like
\begin{align}
  \label{eq:ymatcomp}
  \bar\Psi S(\phi) \Psi
  &=\begin{pmatrix} \bar\psi^L & \bar\psi^R\end{pmatrix} 
    \begin{pmatrix} 1&0\\0&-1 \end{pmatrix}
    \begin{pmatrix} S_{11} & S_{12} \\ S_{21} & S_{22} \end{pmatrix}
    \begin{pmatrix} \psi^L \\ \psi^R\end{pmatrix} \\
  \label{eq:ymatreal}
  &=\begin{pmatrix} \psi^L_r & \psi^L_i & \psi^R_r & \psi^R_i\end{pmatrix}
    \begin{pmatrix}
      R_{11} & R_{12} & R_{13} & R_{14} \\
      R_{21} & R_{22} & R_{23} & R_{24} \\
      R_{31} & R_{32} & R_{33} & R_{34} \\
      R_{41} & R_{42} & R_{43} & R_{44}
    \end{pmatrix}
    \begin{pmatrix} \psi^L_r \\ \psi^L_i \\ \psi^R_r \\ \psi^R_i\end{pmatrix}.
\end{align}
where $R_{ij}$ are scalars (in the second line, we have folded the sign from $\tilde\gamma^0$ into the $R_{ij}$ terms).

For our theory to be unitary, the Yukawa coupling $g$ must be real, so equation (\ref{eq:complexprod}) imposes the following conditions on $S$ and $R$:
\begin{itemize}
\item The $R_{ij}$ matrix in (\ref{eq:ymatreal}) must be real and symmetric.
\item If $S_{ij}$ are complex scalars (for instance, if $S(\phi)$ is built from $1$, $\tilde\gamma^3$, and $\tilde\gamma^5$), then the matrix
  \begin{equation}
    \begin{pmatrix} 1&0\\0&-1 \end{pmatrix}
    \begin{pmatrix} S_{11} & S_{12} \\ S_{21} & S_{22} \end{pmatrix}
  \end{equation}
  must be Hermitian. In this case, the corresponding block of the $R$ matrix is $\text{Re}[S_{ij}]\cdot \mathbf 1 + \text{Im}[S_{ij}]\cdot (-i\sigma^2)$.
  
\end{itemize}

\subsection{Global symmetry} \label{sec:globsym}

If we have $N_D$ 4-component Dirac fermions in our GNY-like theory (and therefore $N = 4N_D$ Majorana fermions), the Yukawa coupling can be rewritten in the form
\begin{equation}
  \calL_Y = \frac {ig} 2 \psi_a^A R^{AB}(\phi) \psi_a^B,
\end{equation}
where lower indices correspond to the global $\upO(N_D) < U(N_D)$ symmetry of the Dirac fermions and upper indices correspond to the specific Majorana spinor (i.e.\ the choice of $L,R \times r,i$) in each 4-tuple. If we assume $\phi$ is a real, $M$-dimensional vector and $R$ is linear in $\phi$, this becomes
\begin{equation} \label{eq:yukterm}
  \calL_Y = \frac {ig} 2 \phi_m \psi_a^A R_m^{AB} \psi_a^B.
\end{equation}
$R_m$, in this case, is a vector of $M$ 4x4 real symmetric matrices.

Up to an orthogonal change of basis, \textbf{every GNY-like theory should correspond to a choice of $M$ and a set of $R_m$}. A set of $R$ for several GNY-like models can be seen in table \ref{tab:previouslyKnownModels}.
\begin{table}[h]
  \small
  \centering
  \begin{tabular}[pos]{|c|c|c|c|}
    \hline
    Model & $M$ & Symmetry group $G$ & $\{R_m\}$, up to coupling constant\\
    \hline
    GNY& 1 & $\upO(4N_D)$ & \footnotesize $ \begin{pmatrix} 1 & 0 & 0 & 0 \\ 0 & 1 & 0 & 0 \\
                           0 & 0 & 1 & 0 \\ 0 & 0 & 0 & 1 \end{pmatrix} $\\
    \hline
    Chiral Ising & 1 & $\upO(2N_D)^2 \rtimes \Z_2 $ & \footnotesize $\begin{pmatrix} 1 & 0 & 0 & 0 \\ 0 & 1 & 0 & 0 \\ 0 & 0 & -1 & 0 \\ 0 & 0 & 0 & -1 \end{pmatrix} $\\
    \hline
    Chiral XY & 2 & $\upO(2N_D) \rtimes \upO(2)$ & \footnotesize $\begin{pmatrix} 1 & 0 & 0 & 0 \\ 0 & 1 & 0 & 0 \\ 0 & 0 & -1 & 0 \\ 0 & 0 & 0 & -1 \end{pmatrix}, \begin{pmatrix} 0 & 0 & 0 & 1 \\ 0 & 0 & -1 & 0 \\0 & -1 & 0 & 0 \\ 1 & 0 & 0 & 0 \end{pmatrix} $\\
    \hline
    Chiral Heisenberg & 3 & $\U(N_D) \times \upO(3)$ & \footnotesize
     $\begin{pmatrix} 1 & 0 & 0 & 0 \\ 0 & 1 & 0 & 0 \\ 0 & 0 & -1 & 0 \\ 0 & 0 & 0 &-1 \end{pmatrix}$,
     $\begin{pmatrix} 0 & 0 & 0 & 1 \\ 0 & 0 & -1 & 0\\ 0 & -1 & 0 & 0 \\ 1 & 0 & 0 & 0 \end{pmatrix}$,
     $\begin{pmatrix} 0 & 0 & 1 & 0 \\ 0 & 0 & 0 & 1\\ 1 & 0 & 0 & 0 \\ 0 & 1 & 0 & 0 \end{pmatrix}$\\
    \hline
  \end{tabular}
  \caption{List of previously studied GNY-like models, along with their $R$-matrices. Note that this table is incomplete and exists to illustrate the conventions of this section; for a full classification, see tables \ref{tab:gnylist1}--\ref{tab:gnylist3}.}
  \label{tab:previouslyKnownModels}
\end{table}

For a given $M$, different matrices correspond to different global symmetry groups $G$. We devote the remainder of this section to determining $G$ in terms of $M$ and $R_m$.

\subsection{Orthogonal subgroups}

\begin{theorem}\label{th:gsubgroup}
  $G$ is the subgroup of $\upO(4N_D) \times \upO(M)$ that preserves $1_{N_D \times N_D} \otimes R_m$.
\end{theorem}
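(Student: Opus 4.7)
The plan is to split the Lagrangian into its free quadratic part and its Yukawa interaction, determine the maximal symmetry preserved by each separately, and intersect. Any $\upO(M)$-symmetric scalar potential plays no additional role, since as a function of $|\phi|^2$ it is automatically invariant under any fermion transformation and any $V \in \upO(M)$, so no new constraint arises from that sector.

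First, I would argue that every continuous internal symmetry of the free theory lies in $\upO(4N_D)\times \upO(M)$. Since internal symmetries in a non-SUSY theory cannot mix bosons and fermions, and must commute with the (2+1)d Lorentz action on spinor versus scalar indices, the symmetry acts independently and linearly on the real Majorana multiplet $\psi^I$ (with combined index $I=(a,A)$ running over $4N_D$ values) and on the real scalar multiplet $\phi_m$, via $\psi^I \to U^I{}_J\,\psi^J$ and $\phi_m \to V_m{}^n\,\phi_n$. Invariance of the standard Majorana kinetic term $-\tfrac{i}{2}\psi^I \slashed\partial \psi^I$ forces $U \in \upO(4N_D)$, and invariance of $-\tfrac{1}{2}(\partial\phi_m)^2$ forces $V \in \upO(M)$.

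Second, I would track how a generic $(U,V)$ acts on the Yukawa term. Writing $T_m^{IJ} := (1_{N_D\times N_D} \otimes R_m)^{IJ}$, equation (\ref{eq:yukterm}) reads $\calL_Y = \tfrac{ig}{2}\,\phi_m\, T_m^{IJ}\, \psi^I \psi^J$, and under $(U,V)$ it becomes $\tfrac{ig}{2}\,\phi_n\,\bigl(V_m{}^n\, U^T T_m U\bigr)^{IJ}\,\psi^I \psi^J$. Invariance is therefore equivalent to
\begin{equation}
  V_m{}^n\, U^T\, (1_{N_D\times N_D}\otimes R_m)\, U \;=\; 1_{N_D\times N_D}\otimes R_n \qquad \text{for every } n,
\end{equation}
which is precisely the condition that $(U,V)$ stabilizes the tensor $1_{N_D\times N_D}\otimes R_m$ viewed as a rank-3 tensor with one $\upO(M)$ index and two $\upO(4N_D)$ indices. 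Combined with the first step this pins down $G$ as claimed.

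The main conceptual hurdle is the first step: justifying that the full symmetry group of the free theory is really $\upO(4N_D)\times \upO(M)$, i.e.\ that the symmetry must act by linear, boson/fermion-separated transformations, and that the Majorana kinetic term on $4N_D$ real spinors genuinely enjoys the full orthogonal group $\upO(4N_D)$ on the flavor index (rather than, say, only the block structure $\upO(N_D)\times \upO(4)$ suggested by the Dirac packaging). Once those standard facts are granted, the theorem reduces to the short algebraic manipulation in the second step.
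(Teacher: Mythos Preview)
Your proposal is correct and follows essentially the same approach as the paper: use the kinetic terms (and the scalar quartic) to bound $G \le \upO(4N_D)\times \upO(M)$, then impose invariance of the Yukawa term to obtain precisely the stabilizer condition on $1_{N_D}\otimes R_m$. Your version is slightly more explicit about why fermions and bosons cannot mix and why the Dirac packaging does not obstruct the full $\upO(4N_D)$, but the logic and the final algebraic condition match the paper's proof line for line.
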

\begin{proof}
  If we examine the kinetic and quartic terms of equation (\ref{eq:gnyMajorana}), we see that $\psi_i \slashed \partial \psi_i$ and $|\phi|$ must be invariant under the action of $G$. These quantities (ignoring the derivative, which commutes with a change of basis on $\psi$) are the norms of real vectors of length $4N_D$ and $M$, so $G \le \upO(4N_D) \times \upO(M)$.

  The Yukawa term must also be invariant. The action of $Q \in \upO(4N_D)$ and $\Omega \in \upO(M)$ on (\ref{eq:yukterm}) is
  \begin{equation}
    \begin{split}
      \calL_Y'
      &= \frac {ig} 2 (\Omega_{mn} \phi_n) (Q^{AC}_{ac}\psi_c^C) R_m^{AB} (Q^{BD}_{bd} \psi_d^D)\\
      &= \frac {ig} 2 \phi_n \psi_c^C (Q^T (1 \otimes \Omega_{mn} R_m) Q)_{cd}^{CD} \psi_d^D,
    \end{split}
  \end{equation}
  which must equal the original $\calL_Y$.
\end{proof}

If we suppress the flavor indices and shuffle things around, we can write this more succinctly as
\begin{equation}\label{eq:gdef}
  \begin{split}
    G &= \left\{ (Q,\Omega) \in \upO(4N_D) \times \upO(M) \mid
        1_{N_D} \otimes R_m = Q^T ( 1_{N_D} \otimes \Omega_{mn} R_n) Q \right\}\\
      &= \left\{ (Q,\Omega) \in \upO(4N_D) \times \upO(M) \mid
        Q \hat R_m Q^T = \Omega_{mn} \hat R_n \right\},
  \end{split}
\end{equation}
where $\hat R_m \equiv 1 \otimes R_m$.

Keep in mind that there are some choices of $R$ for which not all elements of $\upO(M)$ correspond to elements of $G$. For instance, if $M=2$ and
\begin{align}
  R_1 = \begin{pmatrix} 1 & 0 & 0 & 0 \\ 0 & 1 & 0 & 0 \\
    0 & 0 & 1 & 0 \\ 0 & 0 & 0 & 1 \end{pmatrix},
  && R_2 = \begin{pmatrix} 2 & 0 & 0 & 0 \\ 0 & 2 & 0 & 0 \\
    0 & 0 & 2 & 0 \\ 0 & 0 & 0 & 2 \end{pmatrix},
\end{align}
the $\upO(M)$ subgroup is broken because there is \textbf{no} orthogonal $Q \in \upO(4N_D)$ that compensates for a rotation of $\phi_m$. Recall that, in this paper, we are  \textbf{we are restricting our attention to models in which the $\upO(M)$ subgroup (or at least its connected $\SO(M)$ part) is not explicitly broken}. This imposes some constraints on $R$ and $Q$:

\begin{lemma} \label{lem:samespect}
  If $\SO(M)$ is unbroken, then all $R_m$ (and all normalized linear combinations of the form $\phi_m R_m$) must have the same spectrum.
\end{lemma}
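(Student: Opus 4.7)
The plan is to leverage transitivity of the $\SO(M)$ action on the unit sphere together with the spectrum-preserving nature of orthogonal conjugation.

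First, I would dispose of the trivial case $M=1$, where $\SO(1)$ is the identity and there is only one $R_m$, so the claim is vacuous. For $M\ge 2$, the key observation is that $\SO(M)$ acts transitively on the unit sphere $S^{M-1}$: for any unit vector $v\in\R^M$ there exists $\Omega\in\SO(M)$ whose first row is $v$, i.e.\ $\Omega_{1n}=v_n$.

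Next I would apply the characterization of $G$ from equation (\ref{eq:gdef}) with this $\Omega$. Unbroken $\SO(M)$ means there is some $Q\in\upO(4N_D)$ with
\begin{equation}
Q\,\hat R_m\,Q^T \;=\; \Omega_{mn}\hat R_n \qquad \text{for all } m.
\end{equation}
Setting $m=1$ gives $Q\hat R_1 Q^T = v_n\hat R_n$. Since $Q$ is orthogonal and the $\hat R_m$ are real symmetric, conjugation by $Q$ preserves the spectrum, so the normalized linear combination $v_n\hat R_n$ has the same spectrum as $\hat R_1$. Specializing to $v=e_m$ yields that every $\hat R_m$ has the same spectrum as $\hat R_1$, and allowing general $v$ gives the statement for arbitrary normalized linear combinations $\phi_n\hat R_n$. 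Finally, since $\hat R_m = 1_{N_D}\otimes R_m$, the spectrum of $\hat R_m$ is just the spectrum of $R_m$ with each eigenvalue repeated $N_D$ times, so the result transfers from the $\hat R_m$ to the $R_m$ themselves.

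There is no real obstacle here; the argument is essentially a two-line consequence of transitivity plus the isospectrality built into (\ref{eq:gdef}). The only thing worth stating cleanly is the transitivity of $\SO(M)$ on $S^{M-1}$ (which holds for all $M\ge 2$, including $\SO(2)$ acting on $S^1$), so that the above choice of $\Omega$ with prescribed first row is legitimate.
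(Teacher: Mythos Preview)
Your proposal is correct and follows essentially the same route as the paper: use the defining relation $Q\hat R_m Q^T=\Omega_{mn}\hat R_n$ from equation~(\ref{eq:gdef}) together with the fact that orthogonal conjugation preserves spectra. You are simply more explicit than the paper about invoking transitivity of $\SO(M)$ on $S^{M-1}$, handling the $M=1$ case separately, and transferring the conclusion from $\hat R_m$ back to $R_m$.
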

\begin{proof}
  If the $\SO(M)$ subgroup of $G$ is unbroken, then for every $\Omega \in \SO(M)$, there is some $Q \in \upO(4N_D)$ such that $(Q,\Omega) \in G$. By equation (\ref{eq:gdef}), $\Omega_{mn} \hat R_n = Q \hat R_m Q^T$, so $\hat R_m$ and $\Omega_{mn} \hat R_n$ are orthogonally equivalent.
\end{proof}

\begin{lemma} \label{lem:qrep}
  If $\SO(M)$ is unbroken, then every inequivalent set of $R_m$ corresponds to a choice of a real 4d representation of $\Spin(M)$, which we call $\rho_\Psi$, and a choice of at most
  \begin{equation}
    \sum_{\substack{i < j\\\Vrep\; \in\; \rho_i \otimes \rho_j }} m_i m_j\quad +
    \sum_{\substack{i \\\Vrep\; \in\; \mathrm{Sym}^2 \rho_i }} m_i
  \end{equation}
  distinct couplings. Here, $m_i$ is the multiplicity of each inequivalent irrep (over $\R$) occurring in $\rho_\Psi$, and $\Vrep$ is the vector representation, and $\mathrm{Sym}^2$ denotes the symmetric square of a representation.

  In other words: for every copy of $\Vrep$ occurring in the (symmetric) product of two sub-irreps of $\rho_\Psi$, there is a coupling that can be tuned without breaking $\SO(M)$. But when we obtain an $m \times m$ square matrix of couplings from $m$ copies of the \emph{same} irrep, it can be diagonalized.
\end{lemma}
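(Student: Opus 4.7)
The plan is to recast the $\SO(M)$-invariance of the Yukawa coupling as an equivariance condition for a linear map of $\Spin(M)$-representations and then count independent couplings by Schur's lemma, modulo the residual symmetries that act on multiplicity spaces.

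First I would construct $\rho_\Psi$. By equation (\ref{eq:gdef}) and the tensor form $\hat R_m = 1_{N_D}\otimes R_m$, for every $\Omega\in\SO(M)$ one can choose $Q = 1_{N_D}\otimes U(\Omega)$ with $U(\Omega)\in\upO(4)$ satisfying $U R_m U^T = \Omega_{mn} R_n$. Any two such lifts differ by an element of the centralizer of $\{R_m\}$ in $\upO(4)$, which always contains $\pm 1$; hence $\Omega\mapsto U(\Omega)$ is at worst a projective representation of $\SO(M)$, and the standard lift through the double cover produces a genuine real 4-dimensional representation $\rho_\Psi:\Spin(M)\to\upO(4)$.

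Next I would reframe the counting problem. Since $R_m$ is real, symmetric, and transforms as $R_m\mapsto\Omega_{mn}R_n$, the data $\{R_m\}$ is exactly a $\Spin(M)$-equivariant linear map $\Vrep\to\mathrm{Sym}^2\rho_\Psi$, $e_m\mapsto R_m$. By Schur's lemma, the dimension of the space of such maps equals the multiplicity of $\Vrep$ in $\mathrm{Sym}^2\rho_\Psi$. Decomposing $\rho_\Psi=\bigoplus_i m_i\rho_i$ into isotypic components of inequivalent real irreps and applying the identities
\[
\mathrm{Sym}^2(A\oplus B)=\mathrm{Sym}^2 A\oplus (A\otimes B)\oplus\mathrm{Sym}^2 B,
\]
\[
\mathrm{Sym}^2(\R^m\otimes\rho)=\mathrm{Sym}^2(\R^m)\otimes\mathrm{Sym}^2\rho\,\oplus\,\Lambda^2(\R^m)\otimes\Lambda^2\rho,
\]
one reads off $m_im_j$ copies of $\rho_i\otimes\rho_j$ from each cross term ($i<j$) and $\binom{m_i+1}{2}$ copies of $\mathrm{Sym}^2\rho_i$ from each diagonal term, producing a raw count of embeddings $\Vrep\hookrightarrow\mathrm{Sym}^2\rho_\Psi$. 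Finally I would mod out by residual symmetries: orthogonal changes of basis on each multiplicity space $\R^{m_i}$ commute with $\rho_\Psi$, so $\prod_i\upO(m_i)$ acts as equivalences between different choices of $R_m$. This action diagonalizes the symmetric $m_i\times m_i$ coupling matrix attached to each $\mathrm{Sym}^2\rho_i$-invariant, reducing its $\binom{m_i+1}{2}$ entries to $m_i$ physical parameters, and for each $\rho_i\otimes\rho_j$-invariant it brings the $m_i\times m_j$ cross coupling matrix to a canonical form bounded above by its $m_im_j$ entries. Summing gives the stated upper bound.

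The main obstacle I anticipate is the first step: verifying that the lift $\Omega\mapsto U(\Omega)$ is well-defined up to $\{\pm 1\}$ (rather than a genuinely larger centralizer), so that $\rho_\Psi$ is unambiguously a real 4-dimensional representation of $\Spin(M)$ and not merely of some quotient. A secondary subtlety is justifying that the $\Lambda^2(\R^{m_i})\otimes\Lambda^2\rho_i$ piece can be absorbed or excluded in the cases the lemma covers, so the $\upO(m_i)$-diagonalization genuinely terminates at $m_i$ parameters; once both of these are under control, the remainder of the argument is a routine multiplicity count.
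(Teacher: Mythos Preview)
Your proposal is correct and, at its core, the same argument as the paper's: the paper invokes the Wigner--Eckart theorem to treat $R_m$ as a rank-1 tensor operator on the $\rho_\Psi$ Hilbert space and counts reduced matrix elements $\langle\rho_i\|R\|\rho_j\rangle$ subject to Clebsch--Gordan selection rules and the symmetry constraint on $R$, which is precisely your count of equivariant maps $\Vrep\to\mathrm{Sym}^2\rho_\Psi$ in different vocabulary. Your explicit decomposition of $\mathrm{Sym}^2\bigl(\bigoplus_i m_i\rho_i\bigr)$ is cleaner bookkeeping than the paper's bullet list of reductions, but the content is identical, including the final diagonalization step by the $\prod_i\upO(m_i)$ action on multiplicity spaces.

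On your two flagged obstacles: the paper treats the first just as lightly as you fear, asserting ``by inspection'' that $Q$ furnishes a projective representation lifting to $\Spin(M)$; in fact only \emph{existence} of some lift is needed, not uniqueness, so the size of the centralizer is irrelevant to the counting. The second obstacle --- contributions from $\Lambda^2(\R^{m_i})\otimes\Lambda^2\rho_i$ --- is handled in the paper by a bullet point that implicitly assumes multiplicity one when $\Vrep\subset\Lambda^2\rho_i$ but $\Vrep\not\subset\mathrm{Sym}^2\rho_i$; this is automatic in the setting of the lemma since $\dim\rho_\Psi=4$ forces any irrep of dimension $\ge 3$ to appear with $m_i\le 1$, and for the one- and two-dimensional real irreps of $\Spin(M)$ one checks directly that $\Vrep\not\subset\Lambda^2\rho_i$. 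So neither obstacle is genuine in the four-dimensional case at hand.
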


\begin{proof}
  We can see by inspection that $Q$ furnishes a 4d (projective) representation of $\SO(M)$ over $\R$. Let's call this representation $\rho_\Psi$. Because we only care about the action on $R_m$, on which $Q=-1$ acts trivially, $\rho_\Psi$ is allowed to be a projective representation in the double cover $\Spin(M)$.

  Assume that $\rho_\Psi$ is block-diagonal in its constituent irreps, and can be written as
  \begin{equation}
    \rho_\Psi = \rho_1 \oplus \rho_2 \oplus \dots
  \end{equation}
  (there is always a basis in which this is true).

  $R_m$ itself must transform in the fundamental (i.e.~vector) rep of $\SO(M)$. This is equivalent to saying that $R_m$ is a rank-1 tensor operator on a real Hilbert space with $\Spin(M)$ symmetry. We can apply Wigner--Eckart to find its matrix elements:
  \begin{equation} \label{eq:wet}
    \bra{\rho s} R^{(\Vrep)}_m \ket{\rho' s'} = \bra{\rho s; (\Vrep) m}\ket{\rho' s'} \langle\rho \|R\| \rho'\rangle .
  \end{equation}
The matrix elements here are indexed by the irreps $\rho$ and their individual state $s$. Na\"ively, it seems we have $n^2$ reduced matrix elements (i.e.~couplings) to tune independently, where $n$ is the number of (possibly identical) irreps in $\rho_\Psi$. This number can be reduced further, because:
\begin{itemize}
\item $R$ must be symmetric. So only $n(n+1)/2$ of the reduced matrix elements are \emph{actually} independent.
\item If $\rho$ does not appear in $\rho' \otimes \Vrep$, the CG coefficients vanish, so we can ignore $\langle\rho' \|R\| \rho\rangle$. Because representations over $\R$ are self-conjugate, this condition is equivalent to $\Vrep \notin \rho \otimes \rho'$.
\item If $\rho_\Psi$ contains $m$ copies of the irrep $\rho$, and $\rho \in \Vrep \otimes \rho$, the corresponding block of equation (\ref{eq:wet}) is a Kronecker product. So we can diagonalize the $\langle\rho' \|R\| \rho\rangle$ part, leaving us with only $m$ distinct couplings.
\item If $\rho \otimes \rho$ contains the vector representation, but the Clebsch--Gordan term is entirely antisymmetric, then $\langle\rho' \|R\| \rho\rangle$ must vanish to make $R$ symmetric. This is readily apparent in $\Spin(2)$, where $3 \otimes 3 = 1 \oplus 3 \oplus 5$, but the $3$ corresponds to the antisymmetric part.

  Therefore, for the coupling to be nonzero, it is necessary that $\Vrep$ appear in the \emph{symmetric square} of $\rho$.
\end{itemize}
  By Schur's lemma, no two representations can correspond to the same set of $R_m$, or to two orthogonally equivalent $R_m$s, and a change of basis on $R_m$ does not alter $\rho_\Psi$. So given a choice of  $\rho_\Psi$, all choices of $R_m$ are equivalent \textbf{unless} they have distinct spectra. These spectra are determined by the reduced matrix elements.
\end{proof}

Note that, if we set some of these couplings equal to each other, we get a symmetry enhancement. This is why the GNY and chiral Ising have different symmetry groups, even though both of them have $\rho_\Psi = 1 \oplus 1 \oplus 1 \oplus 1$.

For an enumeration of possible $\rho_\Psi$ for different $M$, see appendix \ref{app:reps}.

\begin{lemma} \label{lem:disconnected}
  The disconnected part of the $\upO(M)$ subgroup is preserved iff either $M>1$, or $M=1$ and the spectrum of $R$ is symmetric about 0.
\end{lemma}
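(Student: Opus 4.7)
The plan is to split into the cases $M=1$ and $M>1$.

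First I would handle $M=1$, which is essentially a linear-algebra exercise. The only nontrivial element of $\upO(1)$ is $\sigma=-1$, acting on $\phi$ (and hence on $R$) by a global sign. By the definition \eqref{eq:gdef} of $G$, this $\sigma$ lifts to $G$ iff there exists $Q\in\upO(4N_D)$ with $Q\hat R Q^T = -\hat R$. Two real symmetric matrices are orthogonally similar iff they have the same eigenvalues with multiplicities, so the condition is exactly $\mathrm{spec}(R)=\mathrm{spec}(-R)$, i.e.\ that the spectrum of $R$ is symmetric about $0$. This handles both directions of the ``iff'' when $M=1$.

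For $M>1$, the task is to show that the disconnected part is automatically preserved. Since $\upO(M)\setminus\SO(M)$ is a single $\SO(M)$-coset, it suffices to exhibit a $Q_{\sigma_1}$ lifting one concrete reflection, say $\sigma_1\colon\phi_1\mapsto-\phi_1$ with the other $\phi_m$ fixed. My plan is to upgrade the $\Spin(M)$-equivariant structure of Lemma~\ref{lem:qrep} to a $\mathrm{Pin}(M)$-equivariant one. As a warm-up, Lemma~\ref{lem:samespect} applied to the $\SO(M)$ rotation by $\pi$ in the $(1,2)$-plane already forces $\mathrm{spec}(R_1)=\mathrm{spec}(-R_1)$, so the spectrum of each $R_m$ is in any case symmetric about $0$. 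The sharper input is that after a suitable rescaling the $R_m$ can be made to satisfy Clifford relations $\{R_m,R_n\}=2\delta_{mn}\mathbf{1}$: this should follow from demanding constancy of $\mathrm{tr}(R(\phi)^{2k})$ on the unit sphere in $\R^M$ (which is given by Lemma~\ref{lem:samespect}) together with the reality and symmetry of the $R_m$. Once Clifford form is attained, $\rho_\Psi$ is a module for the Clifford algebra $\mathrm{Cl}(M)$ and the required $Q_{\sigma_1}$ is simply the $\rho_\Psi$-image (up to a sign) of the unit vector $e_1\in\R^M\subset\mathrm{Cl}(M)$, which generates the corresponding reflection in $\mathrm{Pin}(M)$.

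The main obstacle I anticipate is cleanly establishing the Clifford normalization step for $M>1$: arguing that the spectrum-constancy of $R(\phi)$, together with symmetry and reality of each $R_m$, really does force $\{R_m,R_n\}=0$ for $m\neq n$ (up to choice of scale). A more pedestrian backup plan would be to bypass the general argument and instead invoke the enumeration of possible $\rho_\Psi$'s in appendix~\ref{app:reps}: for each 4-dimensional real $\Spin(M)$-rep satisfying the Yukawa-compatibility condition $\Vrep\subset\mathrm{Sym}^2\rho_\Psi$ one writes down the lift $Q_{\sigma_1}$ by hand. The $M=1$ example of pure GNY (where $R=\mathbf{1}$ has spectrum $\{1,1,1,1\}$ and the disconnected $\upO(1)$ is manifestly broken) confirms that the spectrum-symmetry hypothesis for $M=1$ is genuinely needed.
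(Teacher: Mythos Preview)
Your $M=1$ argument is correct and coincides with the paper's.

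For $M>1$, the obstacle you flag is not merely hard to prove---it is actually false. Spectral constancy of $R(\phi)$ on the unit sphere does \emph{not} force the Clifford relations $\{R_m,R_n\}=2\delta_{mn}\mathbf 1$, and counterexamples already occur among the models being classified. In the $\rho_\Psi=1\oplus 1\oplus 2_1$ model of Table~\ref{tab:gnylist2} with $g_1\neq g_2$ one computes $\{R_1,R_2\}_{34}=g_2^2-g_1^2\neq 0$ and $R_1^2=\mathrm{diag}(g_1^2,g_2^2,g_1^2,g_2^2)$, which is not scalar. Even with a single coupling the relations can fail: for the {\newHeisenberg} model ($\rho_\Psi=1\oplus 3$, Table~\ref{tab:gnylist3}) one has $R_1^2=g^2\,\mathrm{diag}(1,1,0,0)$ and $\{R_1,R_2\}_{23}=g^2\neq 0$. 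So $\rho_\Psi$ is generically not a Clifford module, and the $\mathrm{Pin}(M)$ lift you sketch is unavailable. The underlying reason is Lemma~\ref{lem:qrep}: the $R_m$ are assembled from Clebsch--Gordan data for $\Vrep\subset\mathrm{Sym}^2\rho_\Psi$, and when $\rho_\Psi$ is reducible these need not square to the identity or mutually anticommute.

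Your backup plan is the correct fix and is essentially what the paper does. In its parenthetical remark the paper verifies, row by row through Tables~\ref{tab:gnylist2}--\ref{tab:gnylist3}, that a diagonal sign matrix $X$ exists with $XR_mX^T=-R_m$ for exactly one $m$ and $+R_m$ for the others (e.g.\ $X=\mathrm{diag}(1,-1,-1,1)$ for the first counterexample above, $X=\mathrm{diag}(1,-1,1,1)$ for the second). The paper also prefaces this with a one-line abstract argument---that for $M>1$ the question reduces to whether the inner automorphism group of the relevant Lie algebra has a disconnected component---which is quite different in flavor from your Clifford/$\mathrm{Pin}$ route and is itself rather terse.
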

\begin{proof}
  First, the $M=1$ case:  $\phi$ transforms in the odd rep of $\upO(1) = \Z_2$, so the reflection element takes $R_1$ to $-R_1$. If the spectrum is not symmetric about 0, there is no orthogonal change of basis that can undo this. If the spectrum \emph{is} symmetric about 0, we can find an orthogonal change of basis that reorders the eigenvalues in $R_1$, swapping each with its negation.

  When $M>1$, $\SO(M)$ is a Lie group, and asking whether the disconnected part is preserved is equivalent to asking whether the inner automorphism group of the corresponding Lie algebra has a disconnected part. For all $M>1$, this is true.

 (One can also see by inspection that, for every row in tables \ref{tab:gnylist2} and \ref{tab:gnylist3}, there is a matrix $X$ such that $XR_mX^T = -R_m$ for exactly one $m$ and $XR_mX^T = +R_m$ for all others. All $X$ happen to be diagonal with entries $\pm 1$.)
 \end{proof}

\begin{theorem}
  Given matrices $R_m$ constructed from a spectrum and a choice of $\rho_G$  (see Lemma \ref{lem:qrep}), the symmetry group of a GNY-like model is
  \begin{equation}
    G =
    \begin{cases}
      H &\text{if $M=1$ and the spectrum of R is asymmetric}\\
      H \rtimes \upO(M)&\text{otherwise}
    \end{cases}
  \end{equation}
  or a double cover thereof, where
  \begin{equation}
    H = \left\{ Q \in \upO(4N_D) \mid \forall m, [1_{N_D\times N_D} \otimes R_m,Q] = 0\right\}.
  \end{equation}
\end{theorem}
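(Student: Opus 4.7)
The plan is to apply Theorem \ref{th:gsubgroup} and analyze the projection homomorphism $\pi : G \to \upO(M)$ defined by $\pi(Q,\Omega) = \Omega$. The kernel is identified immediately: a pair $(Q,1)$ lies in $G$ precisely when $Q\hat R_m Q^T = \hat R_m$ for every $m$, which is the defining condition of $H$. Hence $\ker\pi \cong H$.

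Next I would compute the image of $\pi$ using the two preceding lemmas. Lemma \ref{lem:samespect} guarantees that for every $\Omega \in \SO(M)$ the matrix $\Omega_{mn}\hat R_n$ has the same spectrum as $\hat R_m$, so an orthogonal conjugator $Q$ exists and $\SO(M)$ lies in the image. Lemma \ref{lem:disconnected} handles the disconnected component: if $M=1$ and the spectrum is asymmetric then no reflection can be compensated for, the image is trivial, and $G \cong H$, matching the first case of the theorem; in every other case the image is all of $\upO(M)$ and we have a short exact sequence
\begin{equation}
1 \to H \to G \to \upO(M) \to 1.
\end{equation}

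To finish, I would construct a splitting of this sequence. Lemma \ref{lem:qrep} already supplies the candidate: the assignment $\tilde\Omega \mapsto \rho_\Psi(\tilde\Omega)$ is an honest homomorphism $\Spin(M) \to \upO(4N_D)$ satisfying $\rho_\Psi(\tilde\Omega)\,\hat R_m\, \rho_\Psi(\tilde\Omega)^T = \Omega_{mn}\hat R_n$, and this extends to $\mathrm{Pin}(M)$ whenever the disconnected component contributes. Composing with $\pi$ hits all of $\upO(M)$, so if $\rho_\Psi$ descends to an honest linear representation of $\SO(M)$ (resp.\ $\upO(M)$), it provides a section and gives $G \cong H \rtimes \upO(M)$ via the action $\Omega \cdot Q = \rho_\Psi(\Omega) Q \rho_\Psi(\Omega)^{-1}$.

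The main obstacle is the case in which $\rho_\Psi$ is genuinely projective, i.e.\ the nontrivial central element of $\Spin(M)$ maps to a nontrivial element of $H$ rather than to $1 \in \upO(4N_D)$. In that scenario no honest section from $\upO(M)$ to $G$ exists, but the construction above still provides a homomorphism out of $\mathrm{Pin}(M)$, so $G$ is realized as the quotient of $H \rtimes \mathrm{Pin}(M)$ by the diagonal $\Z_2$ identifying the kernel of $\mathrm{Pin}(M) \to \upO(M)$ with its image in $H$, matching the ``double cover thereof'' clause. Verifying that the action of $\mathrm{Pin}(M)$ on $H$ by conjugation is well-defined (independent of the choice of lift within each $H$-coset of $G$) is routine once the kernel/image computation is in place.
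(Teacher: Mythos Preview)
Your approach is correct and genuinely different from the paper's. The paper works almost entirely at the Lie algebra level: it writes $\mathfrak g$ explicitly, decomposes each $X$ as $X_0(\omega)+Y$ with $Y$ in the commutant subalgebra, and then uses a Baker--Campbell--Hausdorff manipulation together with the Jacobi identity to show that $e^{X_0(\omega)+Y}$ factors as $e^{X_0(\omega)}e^{Y'}$ with $Y'$ still in the commutant. From this factorisation it reads off the semidirect product structure on the connected component, and then invokes Lemma~\ref{lem:disconnected} for the extra coset. Your argument instead identifies $H$ as $\ker\pi$, computes the image, and splits the resulting short exact sequence directly using the representation $\rho_\Psi$ supplied by Lemma~\ref{lem:qrep}. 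This is cleaner and more structural: it makes transparent why the ``or a double cover thereof'' clause appears (the section lives on $\mathrm{Pin}(M)$ rather than $\upO(M)$ when $\rho_\Psi$ is genuinely projective), whereas in the paper that caveat is left somewhat implicit in the choice of $X_0(\omega)$. The paper's route, on the other hand, is more self-contained in that it does not need to import the representation $\rho_\Psi$ from Lemma~\ref{lem:qrep}.

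One small imprecision to fix: your image computation appeals to Lemma~\ref{lem:samespect} to produce the conjugator $Q$, but that lemma only tells you each $\hat R_m$ and $\Omega_{mn}\hat R_n$ are \emph{individually} orthogonally equivalent, not that a single $Q$ works for all $m$ simultaneously. The simultaneous $Q$ is exactly what $\rho_\Psi$ provides, so you should invoke the hypothesis (that the $R_m$ are constructed via Lemma~\ref{lem:qrep}) already at the image step, not only for the splitting.
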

\begin{proof}
  We introduce the real Lie algebra
  \begin{equation}\label{eq:galg}
    \mathfrak g = \left\{ (X,\omega) \in \mathfrak{so}(4N_D) \oplus \mathfrak{so}(M)
      \mid \omega_{mn} \hat R_n = [X, \hat R_m] \right\}.
  \end{equation}
  which corresponds to the connected part of $G$ via the Lie map (note the similarity to \ref{eq:gdef}). For any element of $\mathfrak g$, we can see that $X$ uniquely determines $\omega$, and $\omega$ determines $X$ up to some matrix $Y$ that commutes with all $R_m$. By the Jacobi identity, the $Y$s with this property form a \textbf{commutant subalgebra}. Now, for each $\omega$, let's choose a canonical $X_0(\omega)$ and redefine $X = X_0(\omega) + Y$.

  Returning to the Lie group, we can see that for any $(X_0(\omega) + Y, \omega) \in \mathfrak g$, there is a corresponding $(Q,\Omega) \in G$ such that
  \begin{align}
    \Omega &= e^\omega \label{eq:lieOmega}, \\
    Q &=  e^{X_0(\omega) + Y} = e^{X_0(\omega)} Q'(\omega, Y), \label{eq:lieQ}
  \end{align}
  where
  \begin{equation}
    Q'(\omega, Y) = \exp(Y + \frac 1 2 [-X_0(\omega), X_0(\omega)+Y]
    + \frac 1 {12} [-X_0(\omega), [-X_0(\omega), \dots]] + \dots)
  \end{equation}
  By equation (\ref{eq:galg}), the set $\{\hat R_m\}$ is mapped to itself under commutation with any $X \in \mathfrak g$, and $Y$ must commute with all $\hat R_m$, so
  \begin{equation}
    \begin{split}
      0 &= [X, [Y, R]] + [Y, [R, X]] + [R, [X, Y]]\\
        &= 0 + [Y, R'] + [R, [X, Y]]\\
        &= 0 + 0 + [R, [X, Y]].
    \end{split}
  \end{equation}
  This means that commutation with any $X$ acts as an automorphism on the commutant subalgebra, so we can rewrite (\ref{eq:lieQ}) as
  \begin{equation}
    Q = e^{X_0(\omega)} e^{Y'(Y, \omega)},
  \end{equation}
  where $Y'$ commutes with all $\hat R_m$ and $Y'=Y$ to leading order in $\omega$.

  It is therefore possible to write any element of the \emph{connected} part of $G$ as $g = (e^{X_0(\omega)},e^\omega)\cdot (e^{Y'},1)$. The $\omega = 0$ subgroup clearly is normal, so the connected subgroup $G$ is a semidirect product of $\SO(M)$ with the commutant subgroup $H = \{e^{Y'}\}$.

  The same argument holds for elements in the disconnected part (which, by Lemma \ref{lem:disconnected}, exists unless $M=1$ and the spectrum of $R$ is asymmetric).  We can write $g$ as $g = (e^{X'_0(\omega)},se^\omega)\cdot (e^{Y'},1)$, where $s$ is the reflection element. So $G = H \rtimes \upO(M) $.
\end{proof}

\subsection{Commutants}

The task of classifying GNY-like models has now been reduced to determining the commutant subgroups $H \le \upO(4N_D)$ for valid sets of $\{R_m\}$. To accomplish this, we will need to prove some basic properties of $H$:

\begin{lemma}\label{lem:commalg}
  The connected part of $H$ is generated by
  \begin{equation}\label{eq:commalg}
    \mathfrak h = \left\{ X \in \mathfrak{so}(4N_D) \mid
      \forall m, [1_{N_D\times N_D} \otimes R_m,X] = 0\right\},
  \end{equation}
  which is a Lie algebra.
\end{lemma}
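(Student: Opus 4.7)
The plan is to verify the two assertions in sequence: first that $\mathfrak h$ is closed under the Lie bracket (so that it deserves to be called a Lie algebra at all), and then that $\mathfrak h$ is precisely the Lie algebra of $H$, from which the statement about the identity component follows by the standard fact that a connected Lie group is generated by the exponential image of its Lie algebra.

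For the first assertion, $\mathfrak h$ is manifestly a real linear subspace of $\mathfrak{so}(4N_D)$ since each defining condition $[\hat R_m, X] = 0$ is linear in $X$. Closure under the bracket is an immediate consequence of the Jacobi identity: if $X, Y \in \mathfrak h$, then
\begin{equation}
[\hat R_m, [X,Y]] = [[\hat R_m, X], Y] + [X, [\hat R_m, Y]] = 0 + 0 = 0,
\end{equation}
so $[X,Y] \in \mathfrak h$.

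For the second assertion, I would begin by noting that $H$ is a closed subgroup of $\upO(4N_D)$: the defining condition $Q \hat R_m Q^T = \hat R_m$ for $Q$ orthogonal is equivalent to $[Q, \hat R_m] = 0$ since $Q^T = Q^{-1}$, and this is a continuous (in fact polynomial) constraint on $Q$. By Cartan's closed-subgroup theorem, $H$ is then an embedded Lie subgroup of $\upO(4N_D)$, and its Lie algebra consists of exactly those $X \in \mathfrak{so}(4N_D)$ such that $\exp(tX) \in H$ for all $t \in \R$. Differentiating the relation $\exp(tX)\, \hat R_m\, \exp(-tX) = \hat R_m$ at $t=0$ yields $[X, \hat R_m] = 0$; conversely, if $[X, \hat R_m] = 0$, the standard identity
\begin{equation}
\exp(tX)\, \hat R_m\, \exp(-tX) = \sum_{k=0}^{\infty} \frac{t^k}{k!}\, (\mathrm{ad}_X)^k (\hat R_m) = \hat R_m
\end{equation}
shows $\exp(tX) \in H$. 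Hence $\mathrm{Lie}(H) = \mathfrak h$, and the identity component of $H$ is generated by $\exp(\mathfrak h)$.

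I do not anticipate a serious obstacle; the argument is routine Lie theory. The only bookkeeping subtlety is reconciling the orthogonal-conjugation form of the constraint used earlier in equation (\ref{eq:gdef}) with the commutator form appearing in (\ref{eq:commalg}), which is handled by the identity $Q^T = Q^{-1}$ on the orthogonal group.
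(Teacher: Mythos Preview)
Your proof is correct and follows essentially the same route as the paper's: closure under the bracket is obtained from the Jacobi identity, and the identification of $\mathfrak h$ with the Lie algebra of $H$ rests on the fact that $e^{tX}$ commutes with $\hat R_m$ for all $t$ if and only if $X$ does. You are a bit more explicit than the paper in invoking Cartan's closed-subgroup theorem and the $\mathrm{Ad}$-exp identity, but the underlying argument is the same.
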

\begin{proof}
  $\mathfrak h$, as defined above, is clearly closed under linear combinations of $X$. For any two $X_1,X_2 \in \mathfrak h$,
  \begin{equation}
    [R,[X_1,X_2]] = -[X_1,[X_2,R]] -[X_2,[R,X_1]] = 0
  \end{equation}
  by the Jacobi identity, so  $\mathfrak h$ is also closed under commutators. It therefore satisfies the axioms of a Lie algebra.

   $e^{\alpha X}$ commutes with a given matrix for all $\alpha$ if and only if $X$ does, so the connected part of $H$ is generated by $\mathfrak h$.
\end{proof}

\begin{lemma}
  Let the \textbf{closure algebra} $\clos \{R_m\}$ of a set $\{R_m\}$ be the real $C^*$-algebra generated by all $R_m$ and their pairwise products. $\{R_m\}$ and  $\clos\{R_m\}$ have the same commutant algebra.\footnote{This is also true for the closure under commutators or anticommutators.}
\end{lemma}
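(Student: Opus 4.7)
The plan is to prove the two inclusions separately. The inclusion $\clos\{R_m\}' \subseteq \{R_m\}'$ is immediate from the definition of the commutant: if $X$ commutes with every element of the larger set $\clos\{R_m\}$, it certainly commutes with the generators $R_m$. The nontrivial direction is the reverse, and it will follow from the observation that the centralizer of a single matrix is automatically a unital algebra.

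Concretely, for any $X \in M_{4N_D}(\R)$, let $C(X) = \{Y \in M_{4N_D}(\R) : [X,Y] = 0\}$. I would first verify that $C(X)$ is a unital real subalgebra: it contains $1_{4N_D}$; it is closed under linear combinations because the commutator is bilinear; and it is closed under products since $[X, AB] = [X,A]B + A[X,B]$ vanishes whenever $A,B \in C(X)$. In finite dimensions, $C(X)$ is automatically norm-closed (as the kernel of the continuous map $Y \mapsto [X,Y]$), and since the $R_m$ are real symmetric, i.e.\ self-adjoint under the transpose involution, the real $C^*$-algebra axioms are satisfied.

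Now suppose $X \in \{R_m\}'$, so that $R_m \in C(X)$ for every $m$. Because $C(X)$ is a unital algebra, it contains every linear combination of $1$, the $R_m$, and their pairwise products; because it is norm-closed, it contains the full real $C^*$-algebra generated by $\{R_m\}$, which is by definition $\clos\{R_m\}$. Hence $X \in \clos\{R_m\}'$. Combining the two inclusions yields the equality of commutant algebras.

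For the footnote, the same structural fact delivers the result: $C(X)$ is also closed under commutators (immediate from the Jacobi identity applied with $X$ in one slot) and under anticommutators (from $[X, AB+BA] = [X,A]B + A[X,B] + [X,B]A + B[X,A] = 0$). Therefore, whenever $X$ commutes with all $R_m$, it commutes with the entire subalgebra generated by $\{R_m\}$ under either commutators or anticommutators along with linear combinations. The main ``obstacle'' here is really just bookkeeping about which closure one takes; there is no substantive difficulty, since the argument rests entirely on the elementary algebraic properties of a single matrix's centralizer.
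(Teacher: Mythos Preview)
Your proof is correct and follows essentially the same approach as the paper: both directions of inclusion, with the nontrivial one resting on the fact that the centralizer of $X$ is closed under products (the paper phrases this as ``any $X$ that commutes with all $\{R_m\}$ must commute with their products as well''). Your version is more detailed---making explicit the algebra structure of $C(X)$, the norm-closure point, and the footnote cases---but the underlying argument is identical.
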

\begin{proof}
  $\{R_m\} \subseteq \clos\{R_m\}$, so the commutant algebra of $\clos \{R_m\}$ must be contained within the commutant algebra of $\{R_m\}$. Any $X$ that commutes with all $\{R_m\}$ must commute with their products as well, so the commutant algebra of $\clos \{R_m\}$ is no smaller than that of $\{R_m\}$. They are therefore equal.
\end{proof}

$R_m$ are $4\times 4$ matrices, so the closure algebra $\clos\{R_m\}$ must be a 4d representation over $\R$ of some algebra $\mathfrak r$. By Weyl's theorem, it must be completely reducible, so we can choose a basis in which $\clos\{R_m\}$ is a block-diagonal sum of modules (i.e.~irreps):
\begin{equation}
  \clos\{R_m\}(x) \simeq \rho_1(x) \oplus \rho_2(x) \oplus \dots =
  \begin{pmatrix}
    \rho_1(x) & & \\ & \rho_2(x) & \\ & & \ddots
  \end{pmatrix}.
\end{equation}
If we group identical irreps together, this becomes
\begin{equation}
  \clos\{R_m\}(x) \simeq \rho_1(x)^{\oplus m_1} \oplus \rho_2(x)^{\oplus m_2} \oplus \dots,
\end{equation}
where $m_j$ is the multiplicity of $\rho_j$ and $\rho^{\oplus m}$ denotes the direct sum of $m$ copies of $\rho$. \textbf{The same applies to the closure algebra of $\hat R_m = 1_{N_D \times N_D} \otimes R_m$, except the multiplicity of each irrep is increased by a factor of $N_D$.} It turns out that this decomposition allows us to compute $\mathfrak h$ directly, via:

\begin{lemma}
  $\clos\{R_m\}$ can be decomposed into irreducible modules, which can be classified by their endomorphism algebras as either ``real,'' ``complex,'' or ``quaternionic.''\footnote{To be clear, all these irreps are still over $\R$. But $\R$ is not algebraically closed, so the endomorphism algebra of an irrep is isomorphic to a division algebra over $\R$, which is either $\R$, $\C$, or $\HQ$.} If we write this decomposition as
  \begin{equation} \label{eq:irrepDiag}
    \clos\{R_m\} = \bigoplus_{j} \left( \rho^\R_j  \right)^{\oplus m^\R_j}
    \ \oplus\      \bigoplus_{j} \left( \rho^\C_j  \right)^{\oplus m^\C_j}
    \ \oplus\      \bigoplus_{j} \left( \rho^\HQ_j \right)^{\oplus m^\HQ_j},
  \end{equation}
  where $m^{\R,\C,\HQ}_j$ is the multiplicity of each module, then the full commutant subalgebra is
  \begin{equation} \label{eq:hdecomp}
    \mathfrak{h} = \bigoplus_{j} \mathfrak{so}\left( m^\R_j N_D \right)
    \ \oplus\      \bigoplus_{j} \mathfrak{u} \left( m^\C_j  N_D \right)
    \ \oplus\      \bigoplus_{j} \mathfrak{sp}\left( m^\HQ_j N_D \right).
  \end{equation}
\end{lemma}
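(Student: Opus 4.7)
The plan is to split the proof into two clean pieces: first compute the full associative commutant of $\clos\{\hat R_m\}$ inside $M_{4N_D}(\R)$ by real Artin--Wedderburn, and then intersect with $\mathfrak{so}(4N_D)$ (as allowed by Lemma \ref{lem:commalg}) to cut out the antisymmetric part of each block. The classification of the three types of simple blocks over $\R$ drops out of Schur's lemma together with the fact that the only finite-dimensional division algebras over $\R$ are $\R,\C,\HQ$ (Frobenius's theorem).

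First I would note that each $R_m$ is real symmetric, so $\clos\{R_m\}$ is closed under transpose and is a semisimple real $*$-subalgebra of $M_4(\R)$. Artin--Wedderburn over $\R$ then yields a direct sum of matrix algebras $M_{k_j}(D_j)$ with $D_j\in\{\R,\C,\HQ\}$, which under the module-theoretic dictionary is exactly the isotypic decomposition (\ref{eq:irrepDiag}). Passing from $R_m$ to $\hat R_m = 1_{N_D}\otimes R_m$ multiplies each multiplicity $m_j^{\R,\C,\HQ}$ by $N_D$. By the real double-commutant theorem, the associative commutant of $\clos\{\hat R_m\}$ in $M_{4N_D}(\R)$ is block-diagonal with blocks $M_{m_j^{\R} N_D}(\R)$, $M_{m_j^{\C} N_D}(\C)$ and $M_{m_j^{\HQ} N_D}(\HQ)$, where the complex and quaternionic matrix algebras are embedded in $M_\bullet(\R)$ using the standard real realizations of $\C$ and $\HQ$.

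It then remains to intersect this associative commutant with $\mathfrak{so}(4N_D)$. This is the step that requires care, and it is the place I expect to be the main obstacle, since the naive intersection of $M_m(\C)$ with the antisymmetric matrices in $M_{2m}(\R)$ could in principle land on a noncompact real form such as $\mathfrak{gl}(m,\R)$ or $\mathfrak{sp}(m,\R)$. The rescue comes from the symmetry of the $R_m$: since each $R_m$ is self-adjoint with respect to the Euclidean inner product on $\R^{4N_D}$, the same inner product restricts to an invariant inner product on every isotypic component, and the endomorphism algebra $D_j$ can be represented by matrices that are self-adjoint for the real part of the natural $D_j$-Hermitian form. Equivalently, one may choose the basis so that the generating imaginary units of $\C$ and $\HQ$ act as antisymmetric real matrices ($J^T=-J$ with $J^2=-1$, and similarly for the quaternionic units $I,J,K$). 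In that basis a real matrix commuting with the $D_j$-action is antisymmetric iff, viewed as an element of $M_{m_j N_D}(D_j)$, it is anti-Hermitian. The anti-Hermitian elements of $M_n(\R)$, $M_n(\C)$, $M_n(\HQ)$ are precisely $\mathfrak{so}(n)$, $\mathfrak{u}(n)$ and $\mathfrak{sp}(n)$, giving (\ref{eq:hdecomp}) block by block and completing the proof.
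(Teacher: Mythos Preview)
Your argument is correct and follows essentially the same route as the paper: decompose into isotypic components, invoke Schur together with Frobenius to identify the endomorphism algebra of each block as $\R$, $\C$, or $\HQ$, and then intersect with the skew-symmetric matrices to pick out the anti-Hermitian part over each division ring. The only stylistic difference is that you package the first step as Artin--Wedderburn plus the double-commutant theorem, whereas the paper writes out the block form of $X$ and applies Schur's lemma to $X_{ij}\rho_j=\rho_i X_{ij}$ directly; your extra care about the choice of basis so that the imaginary units of $\C$ and $\HQ$ act antisymmetrically is a point the paper leaves implicit.
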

\begin{proof}
  Suppose we have some matrix $X$ that commutes with all elements of $\clos\{R_m\}$. We are free to choose a basis such that $\clos\{R_m\}$ takes the block-diagonal form shown in (\ref{eq:irrepDiag}). In this basis, we can write $X$ in block form as
  \begin{equation}
    X =
    \begin{pmatrix}
      X_{11} & X_{12} & X_{13} & \dots \\
      X_{21} & X_{22} & X_{23} &  \\
      X_{31} & X_{32} & X_{33} &  \\
      \vdots &        &        & \ddots
    \end{pmatrix},
  \end{equation}
  where $X_{ij}$ is a $\dim \rho_i \times \dim \rho_j$ matrix. Note that $i$ and $j$ \emph{may} correspond to identical representations. The relation $[X,R_m]=0$ is equivalent to
  \begin{equation}
    X_{ij} \rho_j = \rho_i X_{ij} \text{ (with uncontracted indices)}
  \end{equation}
  for all $i,j$. If  $\rho_i \ne \rho_j$, then by Schur's lemma, $X_{ij} = 0$. It follows that \textbf{$X$ must be block-diagonal, where the blocks correspond to sets of identical irreps}.

  If $\rho_i = \rho_j$, then by Schur's lemma, the possible choices for $X_{ij}$ must furnish a division algebra over $\R$. The Frobenius theorem guarantees that any such algebra is isomorphic to $\R$, $\C$, or $\HQ$, so the individual $X_{ij}$ are real matrices satisfying the same multiplication laws as elements of $\R$, $\C$, or $\HQ$. This means that \textbf{each block of $X$ is isomorphic to an anti-Hermitian matrix over the corresponding field} (anti-Hermiticity is necessary for the real embedding to be skew-symmetric, which must be satisfied for elements of $\mathfrak h$).

  Anti-Hermitian matrices generate the classical automorphism groups of the bilinear form with signature $(+,+,\dots,+)$. Over $\R$, $\C$, and $\HQ$, these groups are $\SO(N)$, $\U(N)$, and $\Sp(N)$ respectively. In this case, $N$ is the multiplicity of the corresponding irrep in the decomposition of $\clos \{R_m\}$ (that is, $N=m^{\mathbb F}_j$). However, because our Lagrangian has $N_D$ copies of $R$, the multiplicity is enhanced by a factor of 4, and we obtain the decomposition in (\ref{eq:hdecomp}).
\end{proof}
It follows directly that:
\begin{theorem}\label{th:finalG}
  For a valid set of $\{R_m\}$, the symmetry group of a GNY-like model is
  \begin{equation}\label{eq:finalg}
    \boxed{\begin{split}
      G &= \left( \prod_j \upO\left( N_D m^\R_j  \right)
      \times   \prod_j \U  \left( N_D m^\C_j  \right)
      \times   \prod_j \Sp \left( N_D m^\HQ_j \right)
      \right)\\ &\qquad \rtimes
      \begin{cases}
        \upO(M) &\text{if $M>1$ or $M=1$ and $R$ has an even spectrum}\\
        1 &\text{if $M=1$ and $R$ does not have an even spectrum}
      \end{cases}
    \end{split}}
  \end{equation}
  where $m^{\R,\C,\HQ}_j$ are the multiplicities of each real, complex, and quaternionic module in the closure of $\{R_m\}$.
\end{theorem}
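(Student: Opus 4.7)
My plan is to combine the two immediately preceding results. The theorem at the end of the previous subsection already established the semidirect decomposition $G = H \rtimes \upO(M)$ (or $G = H$ when $M=1$ and the spectrum of $R$ is asymmetric), where $H$ is the commutant subgroup inside $\upO(4N_D)$. So all that remains is to identify $H$ explicitly as the stated product of classical groups.

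First, I would invoke the preceding lemma to get the Lie algebra
\begin{equation*}
\mathfrak{h} = \bigoplus_j \mathfrak{so}(m_j^\R N_D) \ \oplus\ \bigoplus_j \mathfrak{u}(m_j^\C N_D) \ \oplus\ \bigoplus_j \mathfrak{sp}(m_j^\HQ N_D),
\end{equation*}
and then exponentiate to identify the identity component $H^0$ with $\prod_j \SO(m_j^\R N_D) \times \prod_j \U(m_j^\C N_D) \times \prod_j \Sp(m_j^\HQ N_D)$, using that the exponential map surjects onto the identity component of each compact classical group.

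Second, I would upgrade the $\SO$ factors to $\upO$ in order to capture the disconnected part of $H$. In the basis in which $\clos\{\hat R_m\}$ is block-diagonal in its isotypic components, any element of $H$ restricted to a real-type isotype of multiplicity $m_j^\R N_D$ and irrep dimension $d_j^\R$ takes the Schur form $A \otimes 1_{d_j^\R}$; requiring the result to lie in $\upO(4N_D)$ then forces $A \in \upO(m_j^\R N_D)$, giving the full orthogonal group and not merely $\SO$. The complex and quaternionic blocks require no such fix because $\U(n)$ and $\Sp(n)$ are already connected.

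Combining these factors with the semidirect product with $\upO(M)$ supplied by the preceding theorem then yields the boxed expression. The main subtlety beyond what the previous lemma already did is this $\SO \to \upO$ upgrade, which must be performed at the group level via Schur rather than at the Lie-algebra level; the $N_D$ flavor enhancement has already been absorbed into the multiplicities $m_j^\R N_D$, $m_j^\C N_D$, $m_j^\HQ N_D$, so there is no additional factor to track.
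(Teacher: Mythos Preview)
Your proposal is correct and follows exactly the paper's approach: the paper simply writes ``It follows directly that:'' before stating the theorem, treating it as an immediate corollary of the preceding semidirect decomposition $G = H \rtimes \upO(M)$ together with the lemma computing $\mathfrak h$. In fact you go slightly beyond the paper by explicitly addressing the $\SO \to \upO$ upgrade for the disconnected part of $H$, a detail the paper leaves implicit.
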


Note that, because $R_m$ are 4d, the multiplicities must sum to at most 4, and if $M=1$, they must sum to exactly 4 (because all irreps are real and 1d). In addition, recall that $R_m$ can be obtained from choosing an initial $R_1$ and a 4d representation of $\SO(M)$.

\subsection{List of GNY-like models} \label{sec:listofmodels}

To recap, GNY models with unbroken $\SO(M)$ symmetry can be built from the following recipe:
\begin{enumerate}
\item Choose the number of scalar fields, $M$. This must be at most 3 (see below).
\item Choose a 4d representation of $\Spin(M)$ over $\R$, under which the Majorana spinors transform. Call this $\rho_\Psi$.
\item Use the Wigner--Eckart theorem to determine selection rules for the reduced matrix elements of $\bra{\rho_\Psi} R^{(\Vrep)} \ket{\rho_\Psi}$, and choose values for the nonvanishing ones. These are the coupling constants.
\item Optionally, set some of the coupling constants to be equal to each other (which may provide an additional symmetry enhancement).
\item Construct the Yukawa coupling matrices $\{R_m\}$ explicitly from the couplings and Clebsch--Gordan coefficients.
\item The theory's symmetry group can now be found by determining the simple modules of all $\{R_m\}$, using equation (\ref{eq:finalg}).
\end{enumerate}

For $M \ge 4$, there are no nontrivial 4d representations of $\SO(M)$ over $\R$, so all models reduce to the GNY model.

A full list of GNY-like models is shown in tables \ref{tab:gnylist1}--\ref{tab:gnylist3}, and the ones corresponding to perturbative critical points are collected in table \ref{tab:ssb}.

\newcommand{\modelrow}[8][-0.8em]{\multirow[c]{2}[2]{8em}[#1]{\centering #2} & \multicolumn{2}{c|}{\bigstrut #3} & #4 & #5 \\ \cline{2-5} & \bigstrut #6 & \multicolumn{2}{c|}{#7} & #8 }
\newenvironment{modeltab}{\noindent
 \begin{tabular}{|w{c}{7.7em}|w{c}{4em}|w{c}{3.2em}|w{c}{4.1em}|w{c}{14.6em}|}
   \hline \modelrow[0em] {Name} {Symmetry}
   { \parbox{3em}{\vspace{0.2\baselineskip}\footnotesize\centering \setstretch{0.0}\# of\\ couplings \vspace{0.2\baselineskip}} }
   {Yukawa term} {$\Psi$ rep} {$R$-modules} {Coupling matrices $R_m$} \\ }{\\ \hline \end{tabular}}
\newcommand{\ncp}{N/A. o critical point}

\begin{table}
  \small \centering
  \begin{modeltab}
    \hline \hline \modelrow
    {GNY} {$\upO(4N_D)$} {1}
    {$\begin{matrix}
      -\frac{i g} {2} \phi \Psi_a^\dagger \Psi_a = \\
      \frac{-i g} {2} \phi (\psi_a^1 \psi_a^1 + \psi_a^2 \psi_a^2 + \psi_a^3 \psi_a^3 + \psi_a^4 \psi_a^4)
    \end{matrix}$}
    {$ 1^{\otimes 4}$} {$1 \oplus 1 \oplus 1 \oplus 1$}
    {\scriptsize $\begin{pmatrix} g & 0 & 0 & 0 \\ 0 & g & 0 & 0 \\ 0 & 0 & g & 0 \\ 0 & 0 & 0 & g \end{pmatrix}$ }
    \\
    \hline \hline \modelrow
    {``Quarter GNY''} {$\upO(3N_D) \times \upO(N_D)$} {2}
    {$\begin{matrix} -\frac{i} {2} g_1 \phi (\psi_a^1 \psi_a^1 + \psi_a^2 \psi_a^2 + \psi_a^3 \psi_a^3) \\ -\frac{i} {2} g_2 \phi \psi_a^4 \psi_a^4\end{matrix}$}
    {$ 1^{\otimes 4}$} {$1_1 \oplus 1_1 \oplus 1_1 \oplus 1_2$}
    {\scriptsize $\begin{pmatrix} g_1 & 0 & 0 & 0 \\ 0 & g_1 & 0 & 0 \\ 0 & 0 & g_1 & 0 \\ 0 & 0 & 0 & g_2 \end{pmatrix}$}
    \\
    \hline \hline \modelrow
    {} {$\upO(2N_D)^2$} {2}
    {$\begin{matrix} -\frac{i} {2} g_1 \phi (\psi_a^1 \psi_a^1 + \psi_a^2 \psi_a^2) \\ -\frac{i} {2} g_2 \phi (\psi_a^3 \psi_a^3 + \psi_a^4 \psi_a^4)\end{matrix}$}
    {$ 1^{\otimes 4}$} {$1_1 \oplus 1_1 \oplus 1_2 \oplus 1_2$}
    {\scriptsize $\begin{pmatrix} g_1 & 0 & 0 & 0 \\ 0 & g_1 & 0 & 0 \\ 0 & 0 & g_2 & 0 \\ 0 & 0 & 0 & g_2 \end{pmatrix}$}
    \\
    \hline \hline \modelrow
    {} {$\upO(2N_D) \times \upO(N_D)^2$} {3} 
    {$\begin{matrix} -\frac{i} {2} g_1 \phi (\psi_a^1 \psi_a^1 + \psi_a^2 \psi_a^2) \\ -\frac{i} {2} g_2 \phi \psi_a^3 \psi_a^3 -\frac{i} {2} g_3 \phi \psi_a^4 \psi_a^4\end{matrix}$}
    {$ 1^{\otimes 4}$} {$1_1 \oplus 1_1 \oplus 1_2 \oplus 1_3$}
    {\scriptsize $\begin{pmatrix} g_1 & 0 & 0 & 0 \\ 0 & g_1 & 0 & 0 \\ 0 & 0 & g_2 & 0 \\ 0 & 0 & 0 & g_3 \end{pmatrix}$}
    \\
    \hline \hline \modelrow
    {} {$\upO(N_D)^4$} {4}
    {$\begin{matrix} -\frac{i} {2} g_1 \phi \psi_a^1 \psi_a^1  -\frac{i} {2} g_2 \psi_a^2 \psi_a^2 \\-\frac{i} {2} g_3 \phi \psi_a^3 \psi_a^3 -\frac{i} {2} g_4 \phi \psi_a^4 \psi_a^4\end{matrix}$}
    {$ 1^{\otimes 4}$} {$1_1 \oplus 1_2 \oplus 1_3 \oplus 1_4$}
    {\scriptsize $\begin{pmatrix} g_1 & 0 & 0 & 0 \\ 0 & g_2 & 0 & 0 \\ 0 & 0 & g_3 & 0 \\ 0 & 0 & 0 & g_4 \end{pmatrix}$}
    \\
    \hline \hline \modelrow
    {Chiral Ising} {$\upO(2N_D)^2 \rtimes \Z_2 $} {1}
    {$\begin{matrix}
      -\frac{i g} {2} \phi \bar \Psi_a \Psi_a = \\
      \frac{-i g} {2} \phi (\psi_a^1 \psi_a^1 + \psi_a^2 \psi_a^2 - \psi_a^3 \psi_a^3 - \psi_a^4 \psi_a^4)
    \end{matrix}$}
    {$ 1^{\otimes 4}$} {$1_1 \oplus 1_1 \oplus 1_2 \oplus 1_2$ }
    {\scriptsize $\begin{pmatrix} g & 0 & 0 & 0 \\ 0 & g & 0 & 0 \\ 0 & 0 & -g & 0 \\ 0 & 0 & 0 & -g \end{pmatrix}$}
    \\
    \hline \hline \modelrow
    {} {$\upO(N_D)^4 \rtimes \Z_2 $} {2}
    {$\begin{matrix} -\frac{i} {2} g_1 \phi (\psi_a^1 \psi_a^1 - \psi_a^2 \psi_a^2) \\ -\frac{i} {2} g_2 \phi (\psi_a^3 \psi_a^3 - \psi_a^4 \psi_a^4)\end{matrix}$}
    {$1^{\otimes 4}$} {$1_1 \oplus 1_2 \oplus 1_3 \oplus 1_4$}
    {\scriptsize $\begin{pmatrix} g_1 & 0 & 0 & 0 \\ 0 & -g_1 & 0 & 0 \\ 0 & 0 & g_2 & 0 \\ 0 & 0 & 0 & -g_2 \end{pmatrix}$}
  \end{modeltab}
  \caption{An exhaustive list of $M=1$ GNY-like models and their symmetry groups. All named models have a perturbative fixed point in the $4-\epsilon$ expansion; the unnamed ones do not.}
  \label{tab:gnylist1}
\end{table}

\begin{table}
  \small \centering
  \begin{modeltab}
    \hline \hline \modelrow
    {Chiral XY} {$\upO(2N_D) \rtimes \upO(2) $} {1}
    {$- \frac i 2 g \bar\Psi_a \left( \phi_1 + i \tilde \gamma_5 \phi_2 \right) \Psi_a$}
    {$2_q \oplus 2_{q+1}$ } {$2^\R \oplus 2^\R$}
    {\scriptsize \setlength\arraycolsep{1pt} $\begin{pmatrix} g & 0 & 0 & 0 \\ 0 & g & 0 & 0 \\ 0 & 0 & -g & 0 \\ 0 & 0 & 0 & -g \end{pmatrix},
    \begin{pmatrix} 0 & 0 & 0 & g \\ 0 & 0 & -g & 0\\ 0 & -g & 0 & 0 \\ g & 0 & 0 & 0 \end{pmatrix}$}
    \\
    \hline \hline \modelrow
    {} {$\upO(N_D)^2 \rtimes \upO(2) $} {2}
    {$\begin{matrix}
      - \frac i 2 g_1 \psi_a^{\dagger L}
      \left(\phi_1+i\phi_2\right)^{\vphantom{|}}
      \psi_a^{L}+\text{h.c.}
      \\
      - \frac i 2 g_2 \psi_a^{\dagger R}
      \left(\phi_1+i\phi_2\right)_{\vphantom{|}}
      \psi_a^{R}+\text{h.c.}
    \end{matrix}$}
    {$2_{1/2} \oplus 2_{1/2}$} {$2^\R \oplus 2'^{\;\R}$}
    {\scriptsize \setlength\arraycolsep{0.2pt} $\begin{pmatrix} g_1 & 0 & 0 & 0 \\ 0 & -g_1 & 0 & 0 \\ 0 & 0 & g_2 & 0 \\ 0 & 0 & 0 & -g_2 \end{pmatrix},
   \begin{pmatrix} 0 & g_1 & 0 & 0\\ g_1 & 0 & 0 & 0 \\ 0 & 0 & 0 & g_2 \\ 0 & 0 & g_2 & 0 \end{pmatrix}$}
    \\
    \hline \hline \modelrow
    {} {$\upO(N_D) \times \upO(2) $} {2}
    {$\begin{matrix}
      - \frac i 2 \frac{g_1+g_2}{2} \bar \psi_a^L
      \left(\phi_1+i\phi_2\right)^{\vphantom{|}}
      \psi_a^R+\text{h.c.}
      \\
      - \frac i 2 \frac{g_1-g_2}{2} \psi_a^{\dagger L}
      \left(\phi_1+i\phi_2\right)^{\vphantom{|}}
      \psi_a^R+\text{h.c.}
    \end{matrix}$}
    {$1 \oplus 1 \oplus 2_1$} {$4^\R$}
    {\scriptsize \setlength\arraycolsep{0.2pt} $\begin{pmatrix} 0 & 0 & g_1 & 0 \\ 0 & 0 & 0 & g_2 \\ g_1 & 0 & 0 & 0 \\ 0 & g_2 & 0 & 0 \end{pmatrix},
   \begin{pmatrix} 0 & 0 & 0 & -g_1 \\ 0 & 0 & g_2 & 0 \\ 0 & g_2 & 0 & 0 \\ -g_1 & 0 & 0 & 0 \end{pmatrix}$}
  \end{modeltab}
  \caption{An exhaustive list of $M=2$ GNY-like models that preserve $\upO(2)$ symmetry, along with their full symmetry groups.}
  \label{tab:gnylist2}
\end{table}

\begin{table}
  \small \centering
  \begin{modeltab}
    \hline \hline \modelrow
    {Chiral Heisenberg} {$\U(N_D) \times \upO(3)$} {1}
    {$\begin{matrix}
      - \frac i 2 g \begin{pmatrix} \bar\psi^L_a & \bar\psi^R_a \end{pmatrix}
      (\phi \cdot \sigma) \begin{pmatrix} \psi^L_a \\ \psi^R_a\end{pmatrix} = \\
      - \frac i 2 g \bar \Psi_a (i \tilde\gamma^3 \phi_1 + i \tilde\gamma^5  \phi_2 + \phi_3) \Psi_a
    \end{matrix}$}
    {4} {$4^\C$}
    {\scriptsize
     $\begin{pmatrix} 0 & 0 & g & 0 \\ 0 & 0 & 0 & g\\ g & 0 & 0 & 0 \\ 0 & g & 0 & 0 \end{pmatrix}$,
     $\begin{pmatrix} 0 & 0 & 0 & g \\ 0 & 0 & -g & 0\\ 0 & -g & 0 & 0 \\ g & 0 & 0 & 0 \end{pmatrix}$,
     $\begin{pmatrix} g & 0 & 0 & 0 \\ 0 & g & 0 & 0 \\ 0 & 0 & -g & 0 \\ 0 & 0 & 0 &-g \end{pmatrix}$}
   \\
   \hline \hline \modelrow
   {``\NewHeisenberg'' \\(new fixed point)} {$\upO(N_D) \times \upO(3)$} {1} {$ - \frac i 2 g \chi_a \left( \phi_1 \psi_a^1 + \phi_2 \psi_a^2 + \phi_3 \psi_a^3 \right)$}
   {$1 \oplus 3$} {$4^\R$}
   {\scriptsize
  $\begin{pmatrix} 0 & g & 0 & 0 \\ g & 0 & 0 & 0 \\ 0 & 0 & 0 & 0 \\ 0 & 0 & 0 & 0 \end{pmatrix}$,
$\begin{pmatrix} 0 & 0 & g & 0 \\ 0 & 0 & 0 & 0 \\ g & 0 & 0 & 0 \\ 0 & 0 & 0 & 0 \end{pmatrix}$,
$\begin{pmatrix} 0 & 0 & 0 & g \\ 0 & 0 & 0 & 0 \\ 0 & 0 & 0 & 0 \\ g & 0 & 0 & 0 \end{pmatrix}$}
  \end{modeltab}
  \caption{An exhaustive list of $M=3$ GNY-like models that preserve $\upO(3)$ symmetry, along with their full symmetry groups.  For these models, the $\U(N_D)$ and $\upO(N_D)$ subgroups do not experience symmetry enhancement in (2+1)d, so the semidirect product reduces to a direct product.}
  \label{tab:gnylist3}
\end{table}

\section{Perturbative fixed points} \label{sec:FPs}

In this section, we compute $\beta$-functions for the Yukawa and quartic couplings in the $4-\eps$ expansion to two-loop order, using the results from \cite{Machacek:1983tz,Machacek:1983fi,Machacek:1984zw}. For each model, we identify perturbative fixed points (simultaneous roots of the $\beta$-functions for small $\epsilon$) and compute their anomalous dimensions via the formulas in \cite{Machacek:1983tz} and \cite{Jack:1990eb}. Unitarity requires $\lambda$ to be positive and all $g$ to be real, so we restrict our attention to fixed points where this is true at leading order in $\eps$, setting aside nonperturbative solutions for future analysis. We also ignore here fixed points for which one or more of the $g$ couplings are zero, as these invariably correspond to models with a smaller number of interacting fermions.

Note that at higher orders in the $\epsilon$ expansion one would need to be careful to systematically incorporate counterterms that violate (3+1)d Lorentz symmetry. This is sketched in appendix A of~\cite{Jack:2024sjr}, but is not needed for the two-loop analysis that we perform here. 

\subsection{GNY, quarter GNY, and chiral Ising} \label{sec:betaM1}

For the most general $M=1$ model, where $g_1$, $g_2$, $g_3$, and $g_4$ are allowed to be distinct, the $\beta$-functions are 
\begin{align}
  \label{eq:asdf1} \begin{split}
    \beta_{g_i}
    &=
      -\frac{1}{2} g_i \epsilon
      + \frac{6 g_i^3 + N_D g_i\sum_j g_j^2}{32 \pi ^2}\\
    &\quad +\frac{
      \lambda ^2 g_i - 24 \lambda g_i^3 - 27 g_i^5
      - 21 N_D g_i^3\sum_j g_j^2 - 15 N_D g_i\sum_j g_j^4
      }{3072 \pi ^4}
  \end{split}
\end{align}
and
\begin{align}
  \label{eq:asdf2} \begin{split}
    \beta_{\lambda}
    &= - \lambda \epsilon
      + \frac{2 N_D\sum_j g_j^2 \left(\lambda -6\sum_j g_j^2\right)+3 \lambda ^2}{16 \pi ^2}\\
    &\quad+\frac{-9 \lambda ^2 N_D\sum_j g_j^2+21 \lambda  N_D\sum_j g_j^4+288 N_D\sum_j g_j^6-17 \lambda ^3}{768 \pi ^4}.
  \end{split}
\end{align}
This model has perturbative fixed points when
\begin{equation}
    g_1^2 = g_2^2 = g_3^2 = g_4^2 = \frac{16 \pi ^2}{4 N_D+6}\epsilon +  O(\eps^2)
\end{equation}
and
\begin{equation}
    \lambda^2 =\frac{8 \pi ^2 \left(\sqrt{16 N_D^2+528 N_D+36}-4 N_D+6\right)}{3 \left(4 N_D+6\right)}\epsilon + O(\epsilon^2),
\end{equation}
but we are free to choose the signs of each $g_i$. Up to isomorphism, this gives us three universality classes.

The first two of these are well-known: When all $g$s have the same sign, we obtain the ordinary GNY model (equation \ref{eq:gnyMajorana}), which has $\upO(4N_D)$ symmetry. And when  $g_1 = g_2 = -g_3 = -g_4$, we obtain the chiral Ising model (equation \ref{eq:yukCI}), which has $\upO(2N_D)^2 \rtimes \Z_2$ symmetry.

For $g_1 = g_2 = g_3 = -g_4$, we obtain a fixed point with $\upO(3N_D) \times \upO(N_D)$ symmetry and the Lagrangian
\begin{equation}
  \calL =   \calL_{\mathrm{free}} - \frac \lambda 4 \phi^4
  - i \frac g 2 \phi \left(
    \psi_a^1 \psi_a^1 + \psi_a^2 \psi_a^2 + \psi_a^3 \psi_a^3 - \psi_a^4 \psi_a^4,
    \right)
\end{equation}
which we shall refer to here as the \textbf{quarter GNY model}. It has been mentioned in two other recent classification papers (see table 4 of~\cite{Pannell:2023tzc} and section 12 of~\cite{Jack:2024sjr}), but has not been well-characterized beyond this. 

Aside from the different symmetry group, it displays very similar dynamics to the chiral Ising and GNY CFTs. At two-loop order, the scaling dimensions of $\psi$, $\sigma \sim \phi$, and $\varepsilon \sim \phi^2$ in all three of these models are identical:
\begin{align}
  \label{eq:dPsiQuarter}
  \Delta_\psi &=
  \frac{3}{2}-\frac{\left(4 N_D+5\right)}{2 \left(4 N_D+6\right)}\eps+
                \frac{N_D \left(3-328 N_D\right)+180+\left(2 N_D+33\right)\sqrt{4 N_D \left(N_D+33\right)+9}}{108 \left(4 N_D+6\right)^3} \eps^2,\\
  \label{eq:dSigQuarter}
  \Delta_\sigma
  &=1
    -\frac{3}{4 N_D+6} \eps
-\frac{208 N_D^2-57 N_D+\sqrt{4 N_D^2+132 N_D+9} \left(22 N_D+3\right)+9}{72 \left(2 N_D+3\right)^3}\epsilon ^2,
  \\
  \label{eq:dEpsQuarter}
  \Delta_\varepsilon &= 2+\frac{\epsilon  \left(\sqrt{4 N_D^2+132 N_D+9}-2 N_D-15\right)}{6 \left(2 N_D+3\right)}
\end{align}
(in agreement with \cite{fei2016yukawa} and table 1 of \cite{Mitchell:2024hix} when $N=4N_D$). This is what we expect---the sign should only affect fermion loops with an odd number of edges, which only appear in two-point functions at high loop order.

It is worth noting that, because $\beta_{g_i}$ vanishes when $g_i=0$, RG flows cannot change the sign of any of the Yukawa couplings. Therefore, neither the GNY nor chiral Ising nor quarter GNY models can be related by a perturbative RG flow.

\subsection{Chiral XY}

Table \ref{tab:gnylist1} contains three rows: the first is the chiral XY model (also called the Nambu--Jona--Lasinio--Yukawa model), and the second and third are new theories with two coupling constants each. As in the $M=1$ case, these new models only display perturbative fixed points when $g_1^2 = g_2^2$. But when this condition holds, both models reduce to the chiral XY model, albeit rewritten in a different basis (and unlike in the $M=1$ case, the signs of the couplings do not matter: we can always find a change of basis that flips the sign of $g_1$ or $g_2$). 

The chiral XY CFT is therefore the only GNY-like fixed point with two scalar fields. We direct the reader to \cite{Rosenstein:1993zf}, \cite{fei2016yukawa}, and \cite{Zerf:2017zqi} for calculations of its critical exponents and scaling dimensions. It is worth clarifying that in (2+1)d, this model displays a symmetry enhancement from $\U(N_D) \times \upO(2)$ to  $\upO(2N_D) \rtimes \upO(2)$. 

\subsection{Chiral Heisenberg and a new $M=3$ CFT}

\begin{figure}[th]
  \centering
  \includegraphics[width=\textwidth]{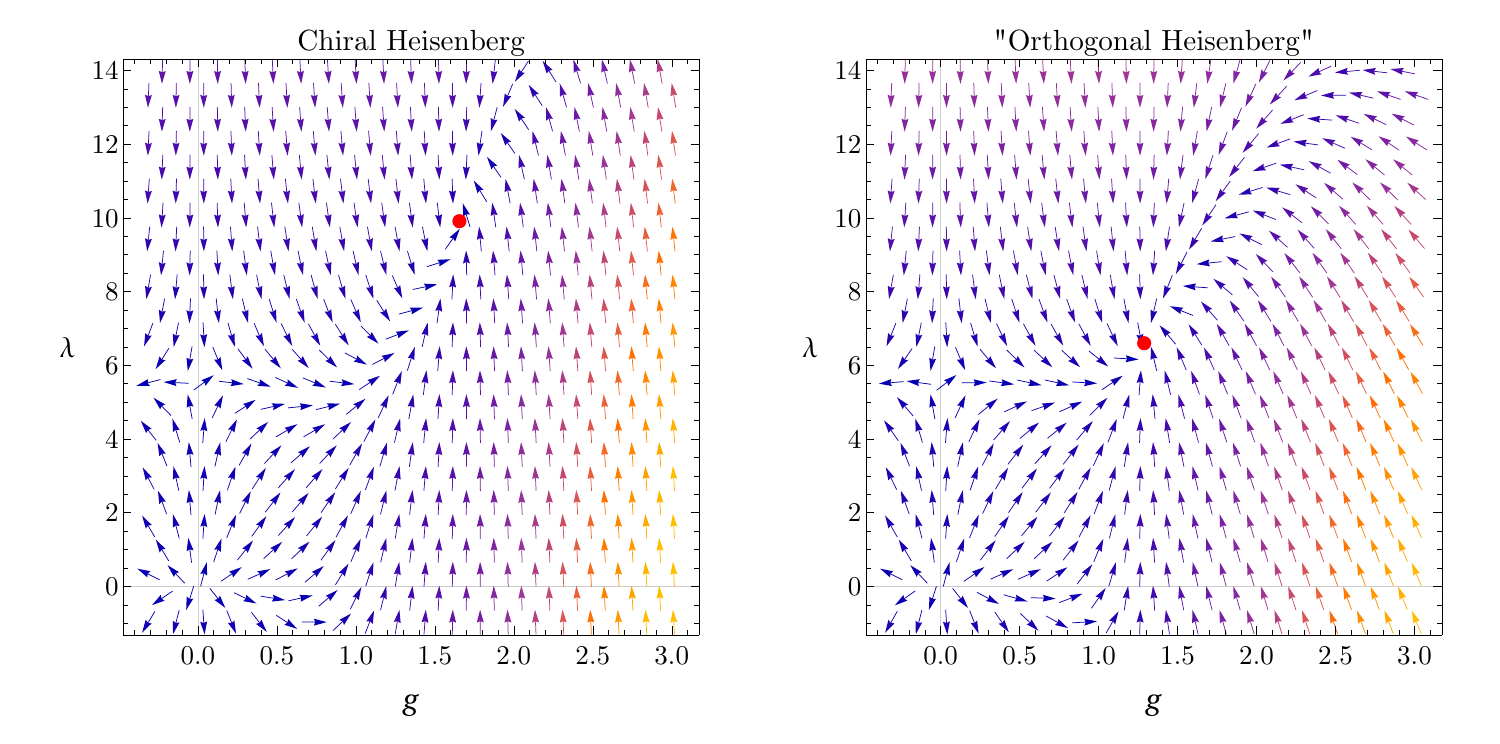}
  \caption{$\beta$-functions for the chiral Heisenberg and {\newHeisenberg} universality classes at $N_D=1$ and $\eps=0.1$. Perturbative fixed points with $g \ne 0$ are indicated in red (note that the free theory and Wilson--Fisher fixed points are also visible).}
  \label{fig:heisenbergBeta}
\end{figure}

There are two $M=3$ models, both of which have perturbative fixed points. The first is known as the chiral Heisenberg model \cite{Rosenstein:1993zf}, and its universailty class appears in the transition between semimetallic and antiferromagnetic phases in graphene \cite{janssen2014antiferromagnetic,Zerf:2017zqi}. The second, which we call the \textbf{{\newHeisenberg} model}, appears to be new. Its Lagrangian, corresponding to the $R$-matrices in the final row of table \ref{tab:gnylist1}, can be written as
\begin{equation}
  \calL =   \calL_{\mathrm{free}} - \frac \lambda 4 \phi^4
  - i \frac g 2 \chi_a \left(
    \phi_1 \psi_a^1 + \phi_2 \psi_a^2 + \phi_3 \psi_a^3
    \right).
\end{equation}
This model possesses an $\upO(N_D) \times \upO(3)$ symmetry (in contrast to the chiral Heisenberg model's $\U(N_D) \times \upO(3)$). Here, $3N_D$ of the Majorana spinors transform as a vector in the $\upO(3)$ subgroup.

The {\newHeisenberg} model's two-loop $\beta$-functions are
\begin{align}
  \beta_g &= -\frac{g}{2} \epsilon +\frac{(N_D+4)g^3}{16 \pi ^2} + \frac{g\lambda ^2-8 g^3 \lambda - (60 N_D+19)g^5}{1024 \pi ^4},\\
  \beta_\lambda &= -\lambda \eps
                  +\frac{4N_D g^2 \lambda- 24 N_D g^4 + 3 \lambda ^2}{16 \pi ^2}
                  +\frac{-6 N_D g^2 \lambda ^2 +8 N_D g^4 \lambda + 240 N_D g^6 -5 \lambda ^3}{256 \pi ^4}
\end{align}
and its scaling dimensions at order $\epsilon^2$ are
\begin{align}
  \label{eq:dChi31}
  \Delta_\chi
  &=\frac 3 2 -\frac{2 N_D+5}{4 \left(N_D+4\right)} \eps
    +\frac{-56 N_D^2+235 N_D+199 +\left(2 N_D+16\right)\sqrt{N_D^2+64 N_D+16} }{96 (N_D + 4)^3} \eps^2,\\ 
  \label{eq:dPsi31}
  \Delta_\psi
  &= \frac 3 2 -\frac{2 N_D+7}{4 \left(N_D+4\right)} \eps
    + \frac{-56 N_D^2+217 N_D+127+(2 N_D+16) \sqrt{N_D^2+64 N_D+16}}{288 (N_D + 4)^3} \eps^2,\\
 \label{eq:dSig31}
  \Delta_\sigma
  &=1
    +\frac{2}{N_D+4} \eps
    + \frac{380 N_D^2-85 N_D+128+\left(16 N_D+32\right)\sqrt{N_D^2+64 N_D+16}}{144 \left(N_D+4\right)^3} \eps^2,
\end{align}
and
\begin{align}
  \label{eq:dEps31}
    \Delta_\epsilon
    &= 2 + \frac{\sqrt{N_D^2+64 N_D+16}-N_D-20}{6 \left(N_D+4\right)} \eps \nonumber\\
    &\qquad+ \frac{{\scriptstyle-48 N_D^3+752 N_D^2+1142 N_D+1664+}\frac{48 N_D^4+3052 N_D^3+24987 N_D^2+6432 N_D+6656}{\sqrt{N_D^2+64 N_D+16}}}{216 \left(N_D+4\right)^3} \eps^2. 
\end{align}
As far as the $4-\eps$ expansion is concerned, this fixed point is qualitatively similar to that of the chiral Heisenberg model, but possesses slightly different conformal data and occurs at a different value of the couplings. For a comparison, see figure   \ref{fig:heisenbergBeta}, as well as the anomalous dimensions in \cite{Rosenstein:1993zf} and \cite{Zerf:2017zqi}.

\section{Spontaneous symmetry breaking and CPT}

So far, we have only classified these models by their global symmetries, ignoring discrete spacetime symmetries. We will devote this subsection to the latter.

First, we need to clarify what we mean by parity, and how we distinguish it from other discrete symmetries. A parity transformation must correspond to reflection along a certain vector---say, $\hat x$---which is implemented by a unitary operator $P$. Rotations in planes containing $\hat x$ must anticommute with $P$,  and all other spacetime symmetries must commute with it. In other words,
\begin{equation}
  \label{eq:pardef}
  P M^{\mu\nu} P^{-1} = 
  \begin{cases}
    -M^{\mu\nu} &\text{if $\mu = x$ or $\nu = x$}\\
    +M^{\mu\nu} &\text{otherwise.}
  \end{cases}
\end{equation}

But if our theory also has a $\Z_2$ global symmetry (call the nontrivial element $Z$), there is an ambiguity. Internal symmetries must commute with spacetime symmetries, so both $P$ and $PZ$ satisfy the condition in equation \ref{eq:pardef}. We could take either to be our ``parity.''

This is important for determining which symmetries are broken in a phase transition. With regard to selection rules, theories that break $Z$ and $P$ but preserve $PZ$ are equivalent to theories that break $Z$ and $PZ$ but preserve $P$. Observers within a theory cannot see the Lagrangian, so for a theory to be ``parity-violating'' in the traditional sense, there must be \emph{no} selection rules that are enforced by a unitary operator satisfying (\ref{eq:pardef}). In other words, \textbf{parity is broken when there are no remaining discrete symmetries that could be called ``parity.''} 

We see this ambiguity play out in the chiral Ising model. When $\phi$ acquires a VEV, the Yukawa coupling in equation \ref{eq:yukCI} becomes a fermion mass term of the form
\begin{equation}
  - \frac {i g} 2 \langle \phi \rangle (\psi^L \psi^L  - \psi^{R} \psi^{R}).
\end{equation}
There is a ``na\"ive'' definition of parity, which we'll call $P_N$, under which $\phi$, and $\psi^L\psi^L$, and $\psi^R\psi^R$ are all odd. So this theory would appear to break parity, as is the case for the standard GNY model. But the fermion bilinears are interchanged under the discrete part of the global symmetry, so $\phi$ and $\psi^L\psi^L - \psi^R\psi^R$ are both odd under $Z$ and even under $Z P_N$. The former is broken and the latter is preserved, so critical points in the chiral Ising universality class break $\Z_2$ global symmetry while preserving parity (which we take to be $P = Z P_N$). This is consistent with the standard definition of parity for \emph{Dirac} spinors.

Charge conjugation and time reversal are much simpler. Majorana fermion bilinears are manifestly invariant under $C$ symmetry, so none of the models we study spontaneously break $C$. And as for $T$ symmetry, we can simply use $T = CP$.

\begin{table}
  \small \centering
  \begin{tabular}{|c|c|c|c|}
    \hline \bigstrut
    Critical point & Global symmetry & Breaking when $\langle \phi \rangle \ne 0$  & Scaling dimensions \\
    \hline
    \hline \bigstrut
    GNY & $\upO(4N_D)$ & parity, time-reversal & \cite{fei2016yukawa}, \cite{Erramilli:2022kgp}, \cite{Mitchell:2024hix}\\
    \hline \bigstrut
    Quarter GNY & $\upO(3N_D) \times \upO(N_D)$ & parity, time-reversal & (\ref{eq:dPsiQuarter}) - (\ref{eq:dEpsQuarter})\\
    \hline \bigstrut
    Chiral Ising & $\upO(2N_D)^2 \rtimes \Z_2 $ & $\Z_2$ & \cite{fei2016yukawa}, \cite{Zerf:2017zqi}  \\
    \hline \bigstrut
    Chiral XY & $\upO(2N_D) \rtimes \upO(2)$ & $\upO(2) \to \Z_2$ & \cite{fei2016yukawa}, \cite{Zerf:2017zqi} \\
    \hline \bigstrut
    Chiral Heisenberg & $\U(N_D) \times \upO(3)$ & $\upO(3) \to \upO(2)$ & \cite{Zerf:2017zqi}\\
    \hline \bigstrut
    \NewHeisenberg & $\upO(N_D) \times \upO(3)$ & $\upO(3) \to \upO(2)$ & (\ref{eq:dChi31}) - (\ref{eq:dChi31})\\
    \hline
  \end{tabular}
  \caption{All critical points, along with the symmetries broken by the order parameter.}
  \label{tab:ssb}
\end{table}

For all CFTs studied in this paper, the transition spontaneously breaks parity when the spectra of $R$ are asymmetric about zero---otherwise, a rotation on $\psi$ can be used to reverse the sign of the fermion mass term. This is the case for the GNY and quarter GNY models, for which $\phi$ is a parity-odd flavor singlet. For all other models, $\phi$ is parity-even but charged under the global symmetry group.  A list of all models, along with their symmetry breaking patterns, can be found in table \ref{tab:ssb}.

In the GNY, quarter GNY, and {\newHeisenberg} models, the Yukawa couplings are not Lorentz-invariant in (3+1)d. They are, however, Lorentz-invariant in (2+1)d, and could in principle arise as quantum critical points of low-dimensional materials.

\section{Discussion and future directions}

We have identified six distinct universality classes that can be obtained from Dirac fermions with a global symmetry and a scalar order parameter. Four of them---the GNY, chiral Ising, chiral XY, and chiral Heisenberg models---are well-known in the literature. The other two, which we dub the ``quarter GNY'' and ``\newHeisenberg'' models, are not well studied, and it remains to be seen whether they can be easily realized in a (2+1)d condensed matter system.

Of these six models, only the GNY model has been fully bootstrapped \cite{Erramilli:2022kgp,Mitchell:2024hix}. In the future, we hope to obtain islands in scaling dimension and OPE space for the other five. This task is feasible with current bootstrap technology, and now that the representation theory of these models is understood, we can extend the bootstrap code from \cite{Erramilli:2022kgp} and \cite{Mitchell:2024hix} (note that the spacetime tensor structures are identical, and only the flavor structures need to be adapted).

It is also an open question whether these theories have other fixed points which are unstable or nonunitary in the $4-\eps$ expansion, but correspond to sensible theories in 3 dimensions. The GNY model, for example, has a perturbative fixed point with $\lambda \le 0$ at 2-loop order \cite{fei2016yukawa}. This could be investigated using bootstrap methods by constructing a navigator function \cite{Reehorst:2021ykw,Liu:2023elz}, sampling it over a wide range of scaling dimension space, and examining which local minima correspond to fixed points with $\lambda \le 0$ or $g^2 \le 0$.

In the future, we plan to extend this classification to models which explicitly break the $\upO(M)$ symmetry to some subgroup,  models in which the $\upO(M)$ symmetry is gauged, and models in which the fermions do not transform in the fundamental representation of $\U(N_D)$ (see the $M=5$ case in the appendix). One can also extend this analysis to models which involve a more general number of (2+1)d Majorana fermions. It would also be interesting to incorporate higher-loop results for the GNY $\beta$-functions and study the structure of RG flows connecting the various models. 

\acknowledgments

We thank Li-Yuan Chiang, Rajeev Erramilli, Mark Gonzalez, Yin-Chen He, Luca Iliesiu, Petr Kravchuk, Aike Liu, Ian Moult, Gordon Rogelberg, Slava Rychkov, David Simmons-Duffin, Robin Tsai, and Naveen Umasankar for discussions. The authors were supported by Simons Foundation grant 488651 (Simons Collaboration on the Nonperturbative Bootstrap) and DOE grant DE-SC0017660.

\appendix

\section{Real representation theory of $\Spin(M)$}\label{app:reps}

\subsection*{$M=1$}

The only irrep is the trivial rep, so $\rho_\Psi = 1 \oplus 1 \oplus 1 \oplus 1$, and we can tune up to 4 distinct couplings.

\subsection*{$M=2$}

We take $\Spin(2)$ is a copy of $\SO(2)$ that is ``twice as large''---in other words, it has the same irreps, but the frequency is now permitted to be a half-integer.\footnote{This definition is worth clarifying because for $M=2$ because, while $\Spin(2)$ is a double cover of $\SO(2)$, it is not the universal cover.} Over $\R$, these are the trivial irrep $1$ and the nontrivial irreps
\begin{equation}
  2_n = \begin{pmatrix}
    \cos (n \theta) & -\sin (n \theta) \\
    \sin (n \theta) & \cos (n \theta)
  \end{pmatrix} \text{ for } n \in \Z^+/2
\end{equation}
which satisfy
\begin{equation}
  \begin{split}
    1 \otimes 1 &= 1,\\
    1 \otimes 2_n &= 2_n,\\
    2_n \otimes 2_n &= 1 \oplus 1 \oplus 2_n,\\
    2_m \otimes 2_n &= 2_{|m-n|} \oplus 2_{m+n}.
  \end{split}
\end{equation}

The vector representation is $2_1$, so the possible decompositions of $\rho_\Psi$ (which do not lead to a vanishing $R$) are:
\begin{equation}
  \begin{split}
    1 \oplus 1 \oplus 2_{1/2} & \rightarrow 1 \text{ coupling (but singular)}, \\
    1 \oplus 1 \oplus 2_1 & \rightarrow 2 \text{ couplings},\\
    2_{1/2} \oplus 2_{1/2} & \rightarrow 2 \text{ couplings},\\
    2_n \oplus 2_{n+1} & \rightarrow 1 \text{ coupling}.
  \end{split}
\end{equation}
For the representation marked ``singular,'' only the lower-right 2x2 block of the $R$ matrices can be nonzero, so some of the fermions will decouple from the rest of the theory.

\subsection*{$M=3$}

The represention theory of  $\Spin(3) = \SU(2)$ over $\C$ is standard material. The irreps over $\R$ are in 1-1 one correspondence with the irreps over $\C$, as follows:
\begin{itemize}
\item If $\rho_\C$ can be made real, it is already an irrep over $\R$, so  $\rho_\R = \rho_\C$ .
\item If $\rho_\C$ is complex and not self-conjugate, $\rho_\R = \rho_\C \oplus \bar{\rho_\C}$ .
\item If $\rho_\C$ is complex and self-conjugate (quaternionic), $\rho_\R = \rho_\C \oplus \rho_\C$ .
\end{itemize}
In $\SU(2)$, all odd-dimensional (i.e.~bosonic) irreps are real, and all even-dimensional (i.e.~fermionic) irreps are quaternionic. So the real irreps are $1$, $3$, $4 = 2_\C \oplus 2_\C$, $5$, and so on. The possible $\rho_\Psi$ are therefore:
\begin{equation}
  \begin{split}
     1 \oplus 3 & \rightarrow 1 \text{ coupling (} 3 \in 3 \otimes 3 \text{ is antisymmetric)},\\
    4 & \rightarrow 1 \text{ coupling}.
  \end{split}
\end{equation}

The first corresponds to our {\newHeisenberg} model, the second to the chiral Heisenberg model.

\subsection*{$M=4$}

Recall that $\Spin(M) = \SU(2) \times \SU(2)$. It has a real, 4d vector irrep $4 = 2_L \otimes 2_R$, but this irrep does \textbf{not} appear in the product of any 4d representations over $\R$. Thus, there are no models.

\subsection*{$M=5$}

The smallest irrep of $\Spin(5) = \Sp(4) = B_2$ is $4$, which is self-conjugate. The Frobenius--Schur indicator (computed in both LieART \cite{Feger:2019tvk} and  LiE \cite{van1994lie}) is $-1$, so the irrep is quaternionic, and there are no GNY-like models.

There \emph{is} a putative $\upO(5)$ model in which the fermions transform in the antisymmetric $\bf 10$ representation, given by:
\begin{equation}
  \calL \supset \psi_{ab} (\epsilon^{abcde} \phi_c) \psi_{de},
\end{equation}
but it cannot be obtained by starting with Dirac spinors in (3+1)d and contracting $\phi$ with a $4\times 4$ spinor matrix. And regardless, it lies outside the scope of this classification because the fermions transform in the adjoint, rather than the fundamental. In the future, we hope to extend our classification to include this model and those like it.

\subsection*{$M \ge 6$}

$\Spin(6)$ has a 4d irrep over $\C$, but not over $\R$. For $M \ge 6$ , the smallest nontrivial irreps have dimension greater than 4, so there are no GNY-like models.

\bibliographystyle{JHEP}
\bibliography{refs}

@article{Chandrasekharan:2013aya,
  author = "Chandrasekharan, Shailesh and Li, Anyi",
  title = "{Quantum critical behavior in three dimensional lattice Gross-Neveu models}",
  eprint = "1304.7761",
  archivePrefix = "arXiv",
  primaryClass = "hep-lat",
  reportNumber = "INT-PUB-13-016",
  doi = "10.1103/PhysRevD.88.021701",
  journal = "Phys. Rev. D",
  volume = "88",
  pages = "021701",
  year = "2013"
}

@article{Chester:2020iyt,
  author = {Chester, Shai M. and Landry, Walter and Liu, Junyu and Poland, David and Simmons-Duffin, David and Su, Ning and Vichi, Alessandro},
  title = {{Bootstrapping Heisenberg magnets and their cubic instability}},
  eprint = {2011.14647},
  archiveprefix = {arXiv},
  primaryclass = {hep-th},
  reportnumber = {CALT-TH-2020-053},
  doi = {10.1103/PhysRevD.104.105013},
  journal = {Phys. Rev. D},
  volume = {104},
  number = {10},
  pages = {105013},
  year = {2021}
}

@article{Derkachov:1993uw,
  author = "Derkachov, Sergey E. and Kivel, N. A. and Stepanenko, A. S. and Vasiliev, A. N.",
  title = "{On calculation in 1/n expansions of critical exponents in the Gross-Neveu model with the conformal technique}",
  eprint = "hep-th/9302034",
  archivePrefix = "arXiv",
  reportNumber = "SACLAY-SPH-T-93-016",
  month = "2",
  year = "1993"
}

@article{Erramilli:2022kgp,
  author = {Erramilli, Rajeev S. and Iliesiu, Luca V. and Kravchuk, Petr and Liu, Aike and Poland, David and Simmons-Duffin, David},
  title = {{The Gross-Neveu-Yukawa archipelago}},
  eprint = {2210.02492},
  archiveprefix = {arXiv},
  primaryclass = {hep-th},
  reportnumber = {CALT-TH 2022-027},
  doi = {10.1007/JHEP02(2023)036},
  journal = {JHEP},
  volume = {02},
  pages = {036},
  year = {2023}
}

@article{Feger:2019tvk,
    author = "Feger, Robert and Kephart, Thomas W. and Saskowski, Robert J.",
    title = "{LieART 2.0 \textendash{} A Mathematica application for Lie Algebras and Representation Theory}",
    eprint = "1912.10969",
    archivePrefix = "arXiv",
    primaryClass = "hep-th",
    doi = "10.1016/j.cpc.2020.107490",
    journal = "Comput. Phys. Commun.",
    volume = "257",
    pages = "107490",
    year = "2020"
}

@article{fei2016yukawa,
  author = "Fei, Lin and Giombi, Simone and Klebanov, Igor R. and
                    Tarnopolsky, Grigory",
  title = "{Yukawa CFTs and Emergent Supersymmetry}",
  journal = "PTEP",
  volume = "2016",
  year = "2016",
  number = "12",
  pages = "12C105",
  doi = "10.1093/ptep/ptw120",
  eprint = "1607.05316",
  archivePrefix = "arXiv",
  primaryClass = "hep-th",
  reportNumber = "PUPT-2504",
  SLACcitation = "%%CITATION = ARXIV:1607.05316;%%"
}

@article{Gracey:1990wi,
  author = "Gracey, J. A.",
  title = "{Calculation of exponent eta to O(1/N**2) in the O(N) Gross-Neveu model}",
  reportNumber = "PRINT-90-0395 (HELSINKI), HU-TFT-90-48",
  doi = "10.1142/S0217751X91000241",
  journal = "Int. J. Mod. Phys. A",
  volume = "6",
  pages = "395--408",
  year = "1991",
  note = "[Erratum: Int.J.Mod.Phys.A 6, 2755 (1991)]"
}

@article{Gracey:1991vy,
  author = "Gracey, J. A.",
  title = "{Computation of the three loop Beta function of the O(N) Gross-Neveu model in minimal subtraction}",
  reportNumber = "PRINT-91-0109 (HELSINKI)",
  doi = "10.1016/0550-3213(91)90012-M",
  journal = "Nucl. Phys. B",
  volume = "367",
  pages = "657--674",
  year = "1991"
}

@article{Gracey:1992cp,
  author = "Gracey, J. A.",
  title = "{Anomalous mass dimension at O(1/N**2) in the O(N) Gross-Neveu model}",
  reportNumber = "LTH-295",
  doi = "10.1016/0370-2693(92)91265-B",
  journal = "Phys. Lett. B",
  volume = "297",
  pages = "293--297",
  year = "1992"
}

@article{Gracey:1993kb,
  author = "Gracey, J. A.",
  title = "{Computation of Beta-prime (g(c)) at O(1/N**2) in the O(N) Gross-Neveu model in arbitrary dimensions}",
  eprint = "hep-th/9306106",
  archivePrefix = "arXiv",
  reportNumber = "LTH-312",
  doi = "10.1142/S0217751X94000285",
  journal = "Int. J. Mod. Phys. A",
  volume = "9",
  pages = "567--590",
  year = "1994"
}

@article{Gracey:1993kc,
  author = "Gracey, J. A.",
  title = "{Computation of critical exponent eta at O(1/N**3) in the four Fermi model in arbitrary dimensions}",
  eprint = "hep-th/9306107",
  archivePrefix = "arXiv",
  reportNumber = "LTH-313",
  doi = "10.1142/S0217751X94000340",
  journal = "Int. J. Mod. Phys. A",
  volume = "9",
  pages = "727--744",
  year = "1994"
}

@article{Gracey:2016mio,
  author = "Gracey, J. A. and Luthe, T. and Schroder, Y.",
  title = "{Four loop renormalization of the Gross-Neveu model}",
  journal = "Phys. Rev.",
  volume = "D94",
  year = "2016",
  number = "12",
  pages = "125028",
  doi = "10.1103/PhysRevD.94.125028",
  eprint = "1609.05071",
  archivePrefix = "arXiv",
  primaryClass = "hep-th",
  reportNumber = "LTH-1098, TTP16-034",
  SLACcitation = "%%CITATION = ARXIV:1609.05071;%%"
}

@article{Gracey:2017fzu,
  author = "Gracey, J. A.",
  title = "{Critical exponent $\omega$ in the Gross-Neveu-Yukawa model at $O(1/N)$}",
  eprint = "1707.05275",
  archivePrefix = "arXiv",
  primaryClass = "hep-th",
  reportNumber = "LTH-1138",
  doi = "10.1103/PhysRevD.96.065015",
  journal = "Phys. Rev. D",
  volume = "96",
  number = "6",
  pages = "065015",
  year = "2017"
}

@article{Gracey:2018fwq,
  author = "Gracey, J. A.",
  title = "{Fermion bilinear operator critical exponents at $O(1/N^2)$ in the QED-Gross-Neveu universality class}",
  eprint = "1808.07697",
  archivePrefix = "arXiv",
  primaryClass = "hep-th",
  reportNumber = "LTH 1173",
  doi = "10.1103/PhysRevD.98.085012",
  journal = "Phys. Rev. D",
  volume = "98",
  number = "8",
  pages = "085012",
  year = "2018"
}

@article{Herbut:2014lfa,
  author = "Herbut, Igor F. and Janssen, Lukas",
  title = "{Topological Mott insulator in three-dimensional systems
                    with quadratic band touching}",
  journal = "Phys. Rev. Lett.",
  volume = "113",
  year = "2014",
  pages = "106401",
  doi = "10.1103/PhysRevLett.113.106401",
  eprint = "1404.5721",
  archivePrefix = "arXiv",
  primaryClass = "cond-mat.str-el",
  SLACcitation = "%%CITATION = ARXIV:1404.5721;%%"
}

@article{Hesselmann:2016tvh,
  author = "Hesselmann, S. and Wessel, S.",
  title = "{Thermal Ising transitions in the vicinity of two-dimensional quantum critical points}",
  eprint = "1602.02096",
  archivePrefix = "arXiv",
  primaryClass = "cond-mat.str-el",
  doi = "10.1103/PhysRevB.93.155157",
  journal = "Phys. Rev. B",
  volume = "93",
  number = "15",
  pages = "155157",
  year = "2016"
}

@article{Huffman:2017swn,
  author = "Huffman, Emilie and Chandrasekharan, Shailesh",
  title = "{Fermion bag approach to Hamiltonian lattice field theories in continuous time}",
  eprint = "1709.03578",
  archivePrefix = "arXiv",
  primaryClass = "hep-lat",
  doi = "10.1103/PhysRevD.96.114502",
  journal = "Phys. Rev. D",
  volume = "96",
  number = "11",
  pages = "114502",
  year = "2017"
}

@article{Huffman:2019efk,
  author = "Huffman, Emilie and Chandrasekharan, Shailesh",
  title = "{Fermion-bag inspired Hamiltonian lattice field theory for fermionic quantum criticality}",
  eprint = "1912.12823",
  archivePrefix = "arXiv",
  primaryClass = "cond-mat.str-el",
  doi = "10.1103/PhysRevD.101.074501",
  journal = "Phys. Rev. D",
  volume = "101",
  number = "7",
  pages = "074501",
  year = "2020"
}

@article{Ihrig:2018hho,
  author = "Ihrig, Bernhard and Mihaila, Luminita N. and Scherer, Michael M.",
  title = "{Critical behavior of Dirac fermions from perturbative renormalization}",
  eprint = "1806.04977",
  archivePrefix = "arXiv",
  primaryClass = "cond-mat.str-el",
  doi = "10.1103/PhysRevB.98.125109",
  journal = "Phys. Rev. B",
  volume = "98",
  number = "12",
  pages = "125109",
  year = "2018"
}

@article{Iliesiu:2015qra,
  archiveprefix = {arXiv},
  author = {Iliesiu, Luca and Kos, Filip and Poland, David and Pufu, Silviu S. and Simmons-Duffin, David and Yacoby, Ran},
  doi = {10.1007/JHEP03(2016)120},
  eprint = {1508.00012},
  journal = {JHEP},
  pages = {120},
  primaryclass = {hep-th},
  reportnumber = {PUPT-2480},
  slaccitation = {%%CITATION = ARXIV:1508.00012;%%},
  title = {{Bootstrapping 3D Fermions}},
  volume = {03},
  year = {2016},
  Bdsk-Url-1 = {https://doi.org/10.1007/JHEP03(2016)120}
}

@article{Iliesiu:2017nrv,
  author = "Iliesiu, Luca and Kos, Filip and Poland, David and Pufu, Silviu S. and Simmons-Duffin, David",
  title = "{Bootstrapping 3D Fermions with Global Symmetries}",
  eprint = "1705.03484",
  archivePrefix = "arXiv",
  primaryClass = "hep-th",
  reportNumber = "PUPT-2524",
  doi = "10.1007/JHEP01(2018)036",
  journal = "JHEP",
  volume = "01",
  pages = "036",
  year = "2018"
}

@article{Janssen:2014gea,
  author = "Janssen, Lukas and Herbut, Igor F.",
  title = "{Antiferromagnetic critical point on graphene's honeycomb lattice: A functional renormalization group approach}",
  eprint = "1402.6277",
  archivePrefix = "arXiv",
  primaryClass = "cond-mat.str-el",
  doi = "10.1103/PhysRevB.89.205403",
  journal = "Phys. Rev. B",
  volume = "89",
  pages = "205403",
  year = "2014",
  note = "[Addendum: Phys.Rev.B 102, 199902 (2020)]"
}

@article{Li:2014aoa,
  author = "Li, Zi-Xiang and Jiang, Yi-Fan and Yao, Hong",
  title = "{Fermion-sign-free Majarana-quantum-Monte-Carlo studies of quantum critical phenomena of Dirac fermions in two dimensions}",
  eprint = "1411.7383",
  archivePrefix = "arXiv",
  primaryClass = "cond-mat.str-el",
  doi = "10.1088/1367-2630/17/8/085003",
  journal = "New J. Phys.",
  volume = "17",
  number = "8",
  pages = "085003",
  year = "2015"
}

@article{Liu:2019xnb,
  author = "Liu, Yuzhi and Wang, Wei and Sun, Kai and Meng, Zi Yang",
  title = "{Designer Monte Carlo simulation for the Gross-Neveu-Yukawa transition}",
  eprint = "1910.07430",
  archivePrefix = "arXiv",
  primaryClass = "cond-mat.stat-mech",
  doi = "10.1103/PhysRevB.101.064308",
  journal = "Phys. Rev. B",
  volume = "101",
  number = "6",
  pages = "064308",
  year = "2020"
}

@article{Liu:2023elz,
  author = {Liu, Aike and Simmons-Duffin, David and Su, Ning and van Rees, Balt C.},
  title = {{Skydiving to Bootstrap Islands}},
  eprint = {2307.13046},
  archiveprefix = {arXiv},
  primaryclass = {hep-th},
  month = {7},
  year = {2023}
}

@article{Luperini:1991sv,
  author = "Luperini, Cristina and Rossi, Paolo",
  title = "{Three loop Beta function(s) and effective potential in the Gross-Neveu model}",
  reportNumber = "IFUP-TH-3-91",
  doi = "10.1016/0003-4916(91)90120-W",
  journal = "Annals Phys.",
  volume = "212",
  pages = "371--401",
  year = "1991"
}

@article{Manashov:2017rrx,
  author = "Manashov, Alexander N. and Strohmaier, Matthias",
  title = "{Correction exponents in the Gross\textendash{}Neveu\textendash{}Yukawa model at  $1/N^2$}",
  eprint = "1711.02493",
  archivePrefix = "arXiv",
  primaryClass = "hep-th",
  reportNumber = "DESY-17-183",
  doi = "10.1140/epjc/s10052-018-5902-1",
  journal = "Eur. Phys. J. C",
  volume = "78",
  number = "6",
  pages = "454",
  year = "2018"
}

@article{Mihaila:2017ble,
  author = "Mihaila, Luminita N. and Zerf, Nikolai and Ihrig, Bernhard and Herbut, Igor F. and Scherer, Michael M.",
  title = "{Gross-Neveu-Yukawa model at three loops and Ising critical behavior of Dirac systems}",
  eprint = "1703.08801",
  archivePrefix = "arXiv",
  primaryClass = "cond-mat.str-el",
  doi = "10.1103/PhysRevB.96.165133",
  journal = "Phys. Rev. B",
  volume = "96",
  number = "16",
  pages = "165133",
  year = "2017"
}

@article{Mitchell:2024hix,
  author = {Mitchell, Matthew S. and Poland, David},
  title = {{Bounding irrelevant operators in the 3d Gross-Neveu-Yukawa CFTs}},
  eprint = {2406.12974},
  archiveprefix = {arXiv},
  primaryclass = {hep-th},
  doi = {10.1007/JHEP09(2024)134},
  journal = {JHEP},
  volume = {09},
  pages = {134},
  year = {2024}
}

@article{Moon:2012rx,
  author = "Moon, Eun-Gook and Xu, Cenke and Kim, Yong Baek and
                    Balents, Leon",
  title = "{Non-Fermi liquid and topological states with strong
                    spin-orbit coupling}",
  journal = "Phys. Rev. Lett.",
  volume = "111",
  year = "2013",
  pages = "206401",
  doi = "10.1103/PhysRevLett.111.206401",
  eprint = "1212.1168",
  archivePrefix = "arXiv",
  primaryClass = "cond-mat.str-el",
  SLACcitation = "%%CITATION = ARXIV:1212.1168;%%"
}

@article{Moshe:2003xn,
  archiveprefix = {arXiv},
  author = {Moshe, Moshe and Zinn-Justin, Jean},
  doi = {10.1016/S0370-1573(03)00263-1},
  eprint = {hep-th/0306133},
  journal = {Phys.Rept.},
  pages = {69-228},
  primaryclass = {hep-th},
  slaccitation = {%%CITATION = HEP-TH/0306133;%%},
  title = {{Quantum field theory in the large N limit: A Review}},
  volume = {385},
  year = {2003},
  Bdsk-Url-1 = {https://doi.org/10.1016/S0370-1573(03)00263-1}
}

@article{Petkou:1996np,
  Archiveprefix = {arXiv},
  Author = {Petkou, Anastasios C.},
  Doi = {10.1016/S0370-2693(96)01227-0},
  Eprint = {hep-th/9602054},
  Journal = {Phys.Lett.},
  Pages = {18-28},
  Primaryclass = {hep-th},
  Reportnumber = {UCY-PHY-95-11},
  Slaccitation = {%%CITATION = HEP-TH/9602054;%%},
  Title = {{Operator product expansions and consistency relations in a $O(N)$ invariant fermionic CFT for $2 < d < 4$}},
  Volume = {B389},
  Year = {1996},
  Bdsk-Url-1 = {http://dx.doi.org/10.1016/S0370-2693(96)01227-0}
}

@article{Reehorst:2021ykw,
  author = {Reehorst, Marten and Rychkov, Slava and Simmons-Duffin, David and Sirois, Benoit and Su, Ning and van Rees, Balt},
  title = {{Navigator Function for the Conformal Bootstrap}},
  eprint = {2104.09518},
  archiveprefix = {arXiv},
  primaryclass = {hep-th},
  reportnumber = {CPHT-RR032.042021},
  doi = {10.21468/SciPostPhys.11.3.072},
  journal = {SciPost Phys.},
  volume = {11},
  pages = {072},
  year = {2021}
}

@article{Rong:2023xhz,
    author = "Rong, Junchen and Rychkov, Slava",
    title = "{Classifying irreducible fixed points of five scalar fields in perturbation theory}",
    eprint = "2306.09419",
    archivePrefix = "arXiv",
    primaryClass = "hep-th",
    doi = "10.21468/SciPostPhys.16.2.040",
    journal = "SciPost Phys.",
    volume = "16",
    number = "2",
    pages = "040",
    year = "2024"
}

@article{Rosenstein:1993zf,
  author = "Rosenstein, B. and Yu, Hoi-Lai and Kovner, A.",
  title = "{Critical exponents of new universality classes}",
  reportNumber = "IP-ASTP-14-93, LA-UR-93-2562",
  doi = "10.1016/0370-2693(93)91253-J",
  journal = "Phys. Lett. B",
  volume = "314",
  pages = "381--386",
  year = "1993"
}

@article{Vacca:2015nta,
  author = "Vacca, Gian Paolo and Zambelli, Luca",
  title = "{Multimeson Yukawa interactions at criticality}",
  eprint = "1503.09136",
  archivePrefix = "arXiv",
  primaryClass = "hep-th",
  doi = "10.1103/PhysRevD.91.125003",
  journal = "Phys. Rev. D",
  volume = "91",
  number = "12",
  pages = "125003",
  year = "2015"
}

@article{van1994lie,
  title={L{\i}E, a software package for Lie group computations},
  author={Van Leeuwen, MAA},
  journal={Euromath Bull},
  volume={1},
  number={2},
  pages={83--94},
  year={1994}
}

@article{Vasiliev:1992wr,
  author = "Vasiliev, A. N. and Derkachov, Sergey E. and Kivel, N. A. and Stepanenko, A. S.",
  title = "{The 1/n expansion in the Gross-Neveu model: Conformal bootstrap calculation of the index eta in order 1/n**3}",
  doi = "10.1007/BF01019324",
  journal = "Theor. Math. Phys.",
  volume = "94",
  pages = "127--136",
  year = "1993"
}

@article{Vasiliev:1993pi,
  author = "Vasiliev, A. N. and Stepanenko, A. S.",
  title = "{The 1/n expansion in the Gross-Neveu model: Conformal bootstrap calculation of the exponent 1/nu to the order 1/n**2}",
  doi = "10.1007/BF01015764",
  journal = "Theor. Math. Phys.",
  volume = "97",
  pages = "1349--1354",
  year = "1993"
}

@article{vojta2000quantum,
  author = "Vojta, Matthias and Zhang, Ying and Sachdev, Subir",
  title = "{Quantum Phase Transitions in d-Wave Superconductors}",
  journal = "Phys. Rev. Lett.",
  volume = "85",
  year = "2000",
  pages = "4940-4943",
  doi = "10.1103/PhysRevLett.85.4940",
  SLACcitation = "%%CITATION = PRLTA,85,4940;%%"
}

@article{vojta2003quantum,
  title ={Quantum phase transitions},
  author ={Vojta, Matthias},
  journal ={Reports on Progress in Physics},
  volume ={66},
  number ={12},
  pages ={2069},
  year ={2003},
  publisher ={IOP Publishing}
}

@article{Wang:2014cbw,
  author = "Wang, Lei and Corboz, Philippe and Troyer, Matthias",
  title = "{Fermionic Quantum Critical Point of Spinless Fermions on a Honeycomb Lattice}",
  eprint = "1407.0029",
  archivePrefix = "arXiv",
  primaryClass = "cond-mat.str-el",
  doi = "10.1088/1367-2630/16/10/103008",
  journal = "New J. Phys.",
  volume = "16",
  number = "10",
  pages = "103008",
  year = "2014"
}

@article{zerf2016superconducting,
  author = "Zerf, Nikolai and Lin, Chien-Hung and Maciejko, Joseph",
  title = "{Superconducting quantum criticality of topological
                    surface states at three loops}",
  journal = "Phys. Rev.",
  volume = "B94",
  year = "2016",
  number = "20",
  pages = "205106",
  doi = "10.1103/PhysRevB.94.205106",
  eprint = "1605.09423",
  archivePrefix = "arXiv",
  primaryClass = "cond-mat.str-el",
  SLACcitation = "%%CITATION = ARXIV:1605.09423;%%"
}

@article{Zerf:2017zqi,
  author = "Zerf, Nikolai and Mihaila, Luminita N. and Marquard, Peter and Herbut, Igor F. and Scherer, Michael M.",
  title = "{Four-loop critical exponents for the Gross-Neveu-Yukawa models}",
  eprint = "1709.05057",
  archivePrefix = "arXiv",
  primaryClass = "hep-th",
  reportNumber = "DESY-17-133",
  doi = "10.1103/PhysRevD.96.096010",
  journal = "Phys. Rev. D",
  volume = "96",
  number = "9",
  pages = "096010",
  year = "2017"
}

@article{toledano1985renormalization,
  title={Renormalization-group study of the fixed points and of their stability for phase transitions with four-component order parameters},
  author={Toledano, Jean-Claude and Michel, Louis and Toledano, Pierre and Brezin, Edouard},
  journal={Physical Review B},
  volume={31},
  number={11},
  pages={7171},
  year={1985},
  publisher={APS}
}

@article{Machacek:1983tz,
    author = "Machacek, Marie E. and Vaughn, Michael T.",
    title = "{Two Loop Renormalization Group Equations in a General Quantum Field Theory. 1. Wave Function Renormalization}",
    reportNumber = "NUB-2590, HUTP-83/A003",
    doi = "10.1016/0550-3213(83)90610-7",
    journal = "Nucl. Phys. B",
    volume = "222",
    pages = "83--103",
    year = "1983"
}

@article{Machacek:1983fi,
    author = "Machacek, Marie E. and Vaughn, Michael T.",
    title = "{Two Loop Renormalization Group Equations in a General Quantum Field Theory. 2. Yukawa Couplings}",
    reportNumber = "NUB 2611",
    doi = "10.1016/0550-3213(84)90533-9",
    journal = "Nucl. Phys. B",
    volume = "236",
    pages = "221--232",
    year = "1984"
}

@article{Machacek:1984zw,
    author = "Machacek, Marie E. and Vaughn, Michael T.",
    title = "{Two Loop Renormalization Group Equations in a General Quantum Field Theory. 3. Scalar Quartic Couplings}",
    reportNumber = "NUB-2653-REV, NUB-2653",
    doi = "10.1016/0550-3213(85)90040-9",
    journal = "Nucl. Phys. B",
    volume = "249",
    pages = "70--92",
    year = "1985"
}

@article{Jack:1990eb,
    author = "Jack, I. and Osborn, H.",
    title = "{Analogs for the $c$ Theorem for Four-dimensional Renormalizable Field Theories}",
    reportNumber = "DAMTP-90-02",
    doi = "10.1016/0550-3213(90)90584-Z",
    journal = "Nucl. Phys. B",
    volume = "343",
    pages = "647--688",
    year = "1990"
}

@article{janssen2014antiferromagnetic,
  title={Antiferromagnetic critical point on graphene's honeycomb lattice: A functional renormalization group approach},
  author={Janssen, Lukas and Herbut, Igor F},
  journal={Physical Review B},
  volume={89},
  number={20},
  pages={205403},
  year={2014},
  publisher={APS}
}

@article{Codello:2020lta,
    author = "Codello, A. and Safari, M. and Vacca, G. P. and Zanusso, O.",
    title = "{Critical models with $N \leq $4 scalars in $d=4-\epsilon$}",
    eprint = "2008.04077",
    archivePrefix = "arXiv",
    primaryClass = "hep-th",
    doi = "10.1103/PhysRevD.102.065017",
    journal = "Phys. Rev. D",
    volume = "102",
    number = "6",
    pages = "065017",
    year = "2020"
}

@article{Hogervorst:2020gtc,
    author = "Hogervorst, Matthijs and Toldo, Chiara",
    title = "{Bounds on multiscalar CFTs in the $\epsilon$ expansion}",
    eprint = "2010.16222",
    archivePrefix = "arXiv",
    primaryClass = "hep-th",
    doi = "10.1007/JHEP04(2021)068",
    journal = "JHEP",
    volume = "04",
    pages = "068",
    year = "2021"
}

@article{Osborn:2020cnf,
    author = "Osborn, Hugh and Stergiou, Andreas",
    title = "{Heavy handed quest for fixed points in multiple coupling scalar theories in the $\epsilon$ expansion}",
    eprint = "2010.15915",
    archivePrefix = "arXiv",
    primaryClass = "hep-th",
    reportNumber = "LA-UR-20-27569",
    doi = "10.1007/JHEP04(2021)128",
    journal = "JHEP",
    volume = "04",
    pages = "128",
    year = "2021"
}

@article{Osborn:2017ucf,
    author = "Osborn, Hugh and Stergiou, Andreas",
    title = "{Seeking fixed points in multiple coupling scalar theories in the $\epsilon$ expansion}",
    eprint = "1707.06165",
    archivePrefix = "arXiv",
    primaryClass = "hep-th",
    reportNumber = "DAMTP-2017-30, CERN-TH-2017-149",
    doi = "10.1007/JHEP05(2018)051",
    journal = "JHEP",
    volume = "05",
    pages = "051",
    year = "2018"
}

@article{Pannell:2023tzc,
    author = "Pannell, William H. and Stergiou, Andreas",
    title = "{Scalar-fermion fixed points in the {\ensuremath{\varepsilon}} expansion}",
    eprint = "2305.14417",
    archivePrefix = "arXiv",
    primaryClass = "hep-th",
    doi = "10.1007/JHEP08(2023)128",
    journal = "JHEP",
    volume = "08",
    pages = "128",
    year = "2023"
}

@article{Jack:2024sjr,
    author = "Jack, Ian and Osborn, Hugh and Steudtner, Tom",
    title = "{Explorations in scalar fermion theories: {\ensuremath{\beta}}-functions, supersymmetry and fixed points}",
    eprint = "2301.10903",
    archivePrefix = "arXiv",
    primaryClass = "hep-th",
    reportNumber = "DO-TH 22/06",
    doi = "10.1007/JHEP02(2024)038",
    journal = "JHEP",
    volume = "02",
    pages = "038",
    year = "2024"
}

@article{gracey1990three,
      author         = "Gracey, J. A.",
      title          = "{Three loop calculations in the O(N) Gross-Neveu model}",
      journal        = "Nucl. Phys.",
      volume         = "B341",
      year           = "1990",
      pages          = "403-418",
      doi            = "10.1016/0550-3213(90)90186-H",
      SLACcitation   = "%%CITATION = NUPHA,B341,403;%%"
}

@article{rosenstein1993critical,
      author         = "Rosenstein, B. and Yu, Hoi-Lai and Kovner, A.",
      title          = "{Critical exponents of new universality classes}",
      journal        = "Phys. Lett.",
      volume         = "B314",
      year           = "1993",
      pages          = "381-386",
      doi            = "10.1016/0370-2693(93)91253-J",
      reportNumber   = "IP-ASTP-14-93, LA-UR-93-2562",
      SLACcitation   = "%%CITATION = PHLTA,B314,381;%%"
}

@article{Hawashin:2025cua,
    author = "Hawashin, Bilal and Scherer, Michael M. and Janssen, Lukas",
    title = "{Gross-Neveu-XY quantum criticality in moir{\'e} Dirac materials}",
    eprint = "2503.19963",
    archivePrefix = "arXiv",
    primaryClass = "cond-mat.str-el",
    doi = "10.1103/PhysRevB.111.205129",
    journal = "Phys. Rev. B",
    volume = "111",
    number = "20",
    pages = "205129",
    year = "2025"
}

@article{Semenoff:2024jqf,
    author = "Semenoff, Gordon W. and Stewart, Riley A.",
    title = "{Parity-violating marginal deformation of the 3D Gross-Neveu-Thirring model}",
    eprint = "2408.04855",
    archivePrefix = "arXiv",
    primaryClass = "hep-th",
    doi = "10.1103/PhysRevD.111.025007",
    journal = "Phys. Rev. D",
    volume = "111",
    number = "2",
    pages = "025007",
    year = "2025"
}

\end{document}